 \definecolor{darkblue}{rgb}{0,0,.5}
 \definecolor{BLUE}{rgb}{0,0,1}
 \definecolor{BLACK}{rgb}{0,0,0}
\newcommand{\C}[1]{{\cal{#1}}}
\newcommand{\bb}[1]{\textbf{#1}}
\newcommand{\lr}[1]{{\left\langle {#1}\right\rangle}}
\newcommand{\rl}[0]{{\rangle\langle}}
\def\dbar{{\mathchar'26\mkern-12mu d}}
\begin{document}

\title{First and Second Law of Quantum Thermodynamics: \\ 
A Consistent Derivation Based on a Microscopic Definition of Entropy}

\author{Philipp Strasberg$^1$}
\author{Andreas Winter$^{1,2}$}
\affiliation{$^1$F\'isica Te\`orica: Informaci\'o i Fen\`omens Qu\`antics, Departament de F\'isica, Universitat Aut\`onoma de Barcelona, 08193 Bellaterra (Barcelona), Spain}
\affiliation{$^2$ICREA -- Instituci\'o Catalana de Recerca i Estudis Avan\c{c}ats, Pg.~Lluis Companys, 23, 08010 Barcelona, Spain}

\date{\today}

\begin{abstract}
 Deriving the laws of thermodynamics from a microscopic picture is a central quest of statistical mechanics. This 
 tutorial focuses on the derivation of the first and second law for isolated and open quantum systems far from 
 equilibrium, where such foundational questions also become practically relevant for emergent nanotechnologies. 
 The derivation is based on a microscopic definition of five essential quantities: internal energy, thermodynamic 
 entropy, work, heat and temperature. These definitions are shown to satisfy the phenomenological laws of 
 nonequilibrium thermodynamics for a large class of states and processes. The consistency with previous results is 
 demonstrated. The framework applies to multiple baths including particle transport and accounts for processes with, 
 e.g., a changing temperature of the bath, which is determined microscopically. An integral fluctuation theorem 
 for entropy production is satisfied. In summary, this tutorial introduces a consistent and versatile framework 
 to understand and apply the laws of thermodynamics in the quantum regime and beyond. 
\end{abstract}

\maketitle

\newtheorem{lemma}{Lemma}[section]

\section{Introduction to Nonequilibrium Thermodynamics: Phenomenology}
\label{sec noneq thermo}

Before we turn to any microscopic derivation of the laws of thermodynamics, it seems worthwhile to briefly recall 
\emph{what} we actually want to derive.

Thermodynamics is an independent physical theory, whose principles have been applied with an 
enormous success over a wide range of length, time and energy scales.  
It arose out of the desire to understand transformations of matter in chemistry and 
engineering in the 19th century~\footnote{Sometimes it is asserted that thermodynamics played an important role for 
the \emph{industrial revolution} to design efficient heat engines. Historically speaking, this is incorrect. The 
industrial revolution is associated with the period from 1760 to (at most) 1840 (the steam engine of Watt was 
introduced in 1776). The first modern work on thermodynamics is perhaps due to Carnot in 1824, who, however, was not 
read by his contemporaries. The first law of thermodynamics was established around 1850 and the modern formulation of 
the second law goes back to Clausius in 1865. Even then, however, it took time until engineers were inspired by 
theoretical insights from thermodynamics. To the best of our knowledge, Diesel (at the end of the 19th century) 
patented the first engine which was based on the insight that a high temperature gradient increases the efficiency of 
the engine.}. 
The systems under investigation were macroscopic and described by very few variables; for instance, 
temperature $T$, pressure $p$ and volume $V$. These macroscopic systems could exchange heat $Q$ with their surroundings 
and mechanical work $W$ could be supplied to them. 
A prototypical example of a thermodynamic setup partitioned into a system, a heat bath and a work reservoir is 
sketched in Fig.~\ref{fig setup thermo}. 

The theory is based on two central axioms, which are called the first and second law of 
thermodynamics~\cite{FlandersSwann:1st-and-2nd, KondepudiPrigogineBook2007} (there is also a zeroth and a third law of 
thermodynamics, which are not the topic of this paper). The first law states that the change $\Delta U_S$ in internal 
energy of the system is balanced by heat $Q$ and mechanical work $W$: 
\begin{equation}\label{eq 1st law}
 \Delta U_S = Q + W.
\end{equation}
Note that we define heat and work to be positive if they increase the internal energy of the system. The first 
law is a consequence of conservation of energy applied to the system, the heat bath and the work reservoir. However, 
the fundamental distinction between heat and work becomes only transparent by considering the second law. 

\begin{figure}[b]
 \centering\includegraphics[width=0.30\textwidth,clip=true]{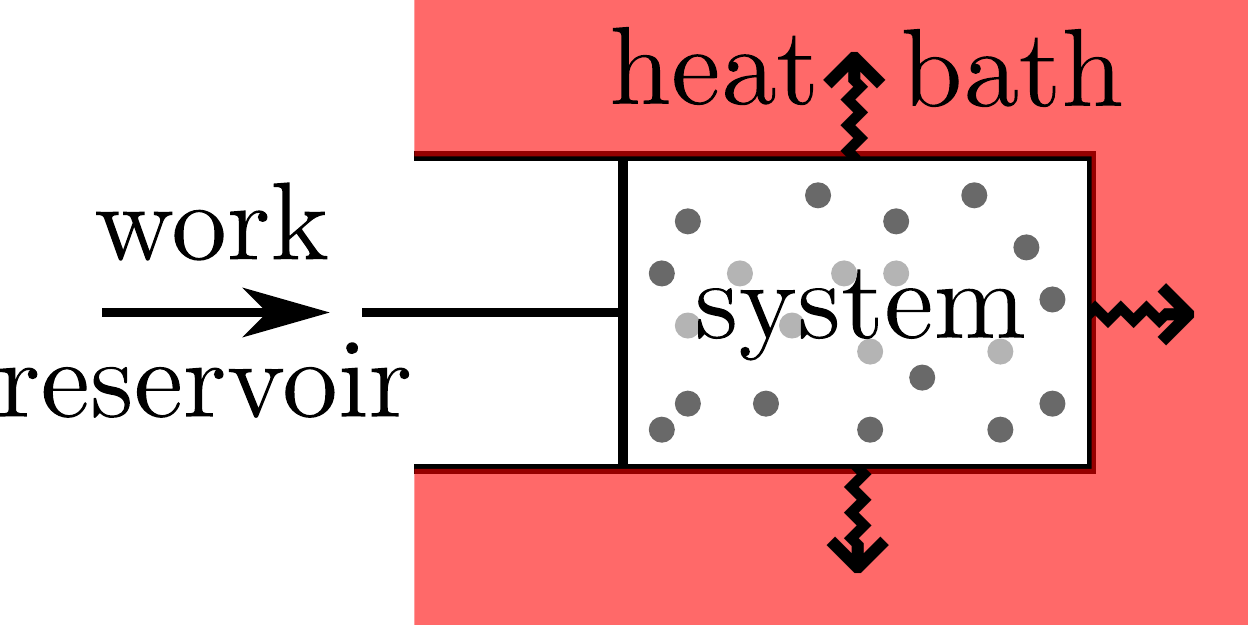}
 \label{fig setup thermo} 
 \caption{Thermodynamic setup where the system is a gas in a container. By pushing a piston, the 
 thermodynamic variables (such as $T$, $p$ or $V$) can be changed in a mechanically controlled way, which is abstracted 
 as the action of a `work reservoir.' Furthermore, through the walls of the container the gas is in simultaneous 
 contact with a heat bath, with which it can exchange energy. This exchange of energy is accompanied with an exchange 
 of entropy, which is the defining property to call this energy exchange `heat.' }
\end{figure}

The second law, in its most general form, states that ``the entropy of the universe tends to a 
maximum''~\cite{Clausius1865}. In equations, for any physical process 
\begin{equation}\label{eq EP basic basic}
 \Delta S_\text{univ} \ge 0, \tag{2a}
\end{equation}
where $S_\text{univ}$ denotes the \emph{thermodynamic} entropy of the universe, which should be distinguished from any 
information theoretic notion of entropy at this point. Note that the terminology `universe' does not necessarily refer 
to the entire universe in the cosmological sense, but to any system which is sufficiently isolated from the rest of 
the world. For our purposes, this also includes a gas of ultracold atoms~\cite{BlochDalibardZwergerRMP2008, 
LewensteinSanperaAhufingerBook2012} or the system \emph{and} the bath within the open quantum systems 
paradigm~\cite{BreuerPetruccioneBook2002, DeVegaAlonsoRMP2017}. The change in entropy of the universe is often called 
the \emph{entropy production}~\cite{KondepudiPrigogineBook2007} and denoted by $\Sigma = \Delta S_\text{univ}$. 
If $\Sigma = 0$, the process is called \emph{reversible}, otherwise \emph{irreversible}. 

Focusing on the system-bath setup, e.g., as sketched in Fig.~\ref{fig setup thermo}, the entropy of the universe 
is often additively decomposed into the entropy of the system and the environment: 
$S_\text{univ} = S_S + S_\text{env}$. This is justified whenever surface effects are negligible compared to bulk 
properties, which is often (but not always) the case for macroscopic systems. Then, the second law becomes 
\begin{equation}\label{eq EP basic}
 \Sigma = \Delta S_S + \Delta S_\text{env} \ge 0. \tag{2b}
\end{equation}
Furthermore, the environment is typically assumed to be well-described by an equilibrium state with a 
time-dependent temperature $T$ such that its change in entropy is $\Delta S_\text{env} = -\int\dbar Q/T$. Here, 
$\dbar Q$ denotes an infinitesimal heat flow into the system. Then, the second law reads 
\begin{equation}\label{eq EP traditional}
 \Sigma = \Delta S_S - \int\frac{\dbar Q}{T} \ge 0, \tag{2c}
\end{equation}
which was introduced by Clausius, who called $\Sigma$ \emph{uncompensated transformations} (``unkompensierte 
Verwandlungen'')~\cite{Clausius1865}. In fact, the word `entropy' was chosen by Clausius based on the ancient greek 
word for `transformation' (\textgreek{τροπή}). Equation~(\ref{eq EP traditional}) is often referred to as Clausius' 
inequality. Finally, if the bath gets only slightly perturbed away from its initial temperature, here denoted by 
$T_0$, then Eq.~(\ref{eq EP traditional}) reduces to 
\begin{equation}\label{eq EP traditional OQS}
 \Sigma = \Delta S_S - \frac{Q}{T_0} \ge 0 \tag{2d}
\end{equation}
with $Q = \int\dbar Q$ the total flow of heat from the bath. 

These basic building blocks of phenomenological nonequilibrium thermodynamics can be further extended to, e.g., 
multiple heat baths or particle transport (above, we tacitly assumed that the system only exchanges energy but not 
particles with the bath). For most parts, we focus on the microscopic derivations of the laws above and turn to these 
extensions only at the end. 

\section{Goal of this tutorial}
\label{sec goal}
\setcounter{equation}{2}

\subsection{The need for a microscopic derivation}

While it is important to emphasize the status of thermodynamics as an independent physical theory, its precise scope is 
debated and problems appear when trying to apply it far from equilibrium. It thus remains a subject of ongoing 
research~\cite{LebonJouCasasVazquezBook2008}. 

The difficulties one faces with a purely phenomenological approach are perhaps best exemplified by the notion of system 
entropy $S_S$. How should this quantity---apart from an unimportant additive constant---be defined out of equilibrium? 
Clausius suggested to use Eq.~(\ref{eq EP traditional}) by postulating that any two system states can be connected by 
a \emph{reversible} transformation~\cite{Clausius1865}. If such a transformation is found, 
inequality~(\ref{eq EP traditional}) becomes an equality, 
\begin{equation}\label{eq Clausius reversible}
 \Delta S_S = \int_R \frac{\dbar Q}{T},
\end{equation}
where the subscript $R$ means `reversible.' Equation~(\ref{eq Clausius reversible}) allows to quantify $\Delta S_S$ by 
measuring the time-dependent temperature $T$ and by computing $\dbar Q = \C C_B(T)dT$, where $\C C_B(T)$ is the heat 
capacity of the bath. Unfortunately, on phenomenological grounds it is not known how to 
construct such reversible transformations connecting nonequilibrium states in general, and it seems doubtful that this 
is always possible. Widely accepted solutions to this problem seem to exist only in the linear response 
regime~\cite{PottierBook2010} or if the local equilibrium assumption is valid~\cite{KondepudiPrigogineBook2007}. 

In this tutorial, we mostly focus on small systems, which can show quantum effects, are driven far from 
equilibrium, and are in contact with an environment. Such systems are called open quantum 
systems~\cite{BreuerPetruccioneBook2002, DeVegaAlonsoRMP2017}. For many potential future technologies---such as 
thermoelectric devices, solar cells, energy efficient computers, refrigerators that cool down to almost zero Kelvin, 
or quantum computing, sensing or communication devices---these are very interesting systems. Furthermore, we are also 
interested in isolated quantum many-body systems such as ultracold quantum gases~\cite{BlochDalibardZwergerRMP2008, 
LewensteinSanperaAhufingerBook2012}. In all of these cases, neither the local equilibrium assumption nor linear 
response theory can be applied in general. A thermodynamic description purely based on phenomenological grounds 
therefore appears challenging. 

Moreover, the traditionally used classifications in thermodynamics of a heat bath and a work reservoir are becoming 
increasingly inadequate. Nowadays, experimentalists have access to information beyond simple macroscopic parameters 
such as temperatures or chemical potentials, they can engineer specifically tailored environments and make use of 
more sophisticated external resources, including quantum measurements and feedback control loops. Accounting for all 
these possibilities in a purely phenomenological way seems impossible. 

Finally, a microscopic derivation of the laws of thermodynamics gives us a better understanding about the 
phenomenological theory. The resulting theoretical framework, in which thermodynamic principles are explained and 
supplemented by quantum mechanical and statistical considerations, is called \emph{quantum thermodynamics}. 

\subsection{Setting}

We briefly recall the quantum mechanical setting we are interested in. First, we consider the case of 
an isolated system. Its state at time $t$ is described by a density matrix $\rho(t)$ and the Hamiltonian of the system 
is denoted $H(\lambda_t)$. Here, $\lambda_t$ is some externally specified driving protocol (e.g., a changing 
electromagnetic field or the moving piston in Fig.~\ref{fig setup thermo}). The validity of modelling the dynamics of 
a quantum system via a time-dependent Hamiltonian rests on the assumption that the driving field is generated by 
a classical, macroscopic device. 
Finally, the dynamics of the system state obeys the Liouville-von Neumann equation ($\hbar\equiv 1$ throughout) 
\begin{equation}\label{eq Liouville von Neumann}
 \frac{\partial}{\partial t}\rho(t) = -i[H(\lambda_t),\rho(t)],
\end{equation}
where $[A,B] = AB-BA$ is the commutator. The time evolution starting from an initial state 
$\rho(0)$ (we always set the initial time to be $t=0$) is therefore unitary: 
\begin{equation}\label{eq time evo isolated}
 \rho(t) = U(t,0)\rho(0)U^\dagger(t,0).
\end{equation}
Here, the unitary time evolution operator $U(t,0) = \exp_+[-i\int_0^t ds H(\lambda_s)]$ is defined as the time-ordered 
exponential of the Hamiltonian. Notice that we make \emph{no} assumption about the specific form of $H(\lambda_t)$ in 
the following, we only need to make one assumption about the initial state $\rho(0)$, as explained in the next section. 

Next, we consider open quantum systems and use a subscript $SB$ (for system and bath) to denote the global 
state and Hamiltonian. The latter is of the form 
\begin{equation}
 \begin{split}\label{eq system bath Hamiltonian}
  H_{SB}(\lambda_t) &= H_S(\lambda_t)\otimes 1_B + 1_S\otimes H_B + V_{SB} \\
                    &= H_S(\lambda_t) + H_B + V_{SB},
 \end{split}
\end{equation}
where we suppressed tensor products with the identity in the notation of the second line. Here, $H_S$ ($H_B$) denotes 
the Hamiltonian of the unperturbed system (bath) and $V_{SB}$ their interaction. Again, they are completely 
arbitrary in our framework. However, in view of Fig.~\ref{fig setup thermo}, we assumed that the external driving 
protocol $\lambda_t$ only influences the system Hamiltonian. It is also possible to consider time-dependent interactions 
$V_{SB}(\lambda_t)$ to model, e.g., the coupling and decoupling of the system and the bath. Our results 
continue to hold in this case, but for ease of presentation we assume $V_{SB}$ to be time-independent. Finally, while 
the joint system-bath state $\rho_{SB}(t)$ evolves in time according to Eq.~(\ref{eq Liouville von Neumann}) with 
respect to the Hamiltonian~(\ref{eq system bath Hamiltonian}), the evolution of the reduced system state 
\begin{equation}
 \rho_S(t) = \mbox{tr}_B\{\rho_{SB}(t)\}
\end{equation}
(with $\mbox{tr}_B\{\dots\}$ denoting the partial trace over the bath degrees of freedom) 
is no longer unitary. In fact, the evolution of $\rho_S(t)$ is markedly different and in general very hard to 
compute~\cite{BreuerPetruccioneBook2002, DeVegaAlonsoRMP2017}. The laws of thermodynamics derived below hold, however, 
regardless of these considerations. 

A final word on terminology is useful to avoid confusion. In thermodynamics, a system is called (i)~isolated, 
(ii)~closed or (iii)~open if it can exchange (i)~only work, (ii)~work and heat in form of energy or (iii)~work and 
heat in form of energy and particles with its surroundings. In contrast, in open quantum system theory the words 
isolated and closed are used interchangeably to describe case (i), whereas cases (ii) and (iii) are both called 
open. We indeed use the terminology open in the latter sense and, for definiteness, call case (i) isolated. 

\subsection{Desiderata and assumption}

We here precisely specify what we mean by a consistent microscopic derivation of the laws of thermodynamics 
and what we need to assume to accomplish it. 

First, in Sec.~\ref{sec noneq thermo} we saw that there are five important concepts in phenomenological thermodynamics. 
These are the two state functions internal energy and thermodynamic entropy, the two process-dependent quantities 
mechanical work and heat and the temperature appearing in Clausius' inequality~(\ref{eq EP traditional}). 
For all of them we like to provide a microscopic definition, which is expressed in terms of the quantum mechanical 
Hamiltonian and the density matrix (or quantities derived from them). 

Second, these quantities are supposed to satisfy the first law~(\ref{eq 1st law}) as well as the second 
laws~(\ref{eq EP basic basic}),~(\ref{eq EP basic}),~(\ref{eq EP traditional}) and~(\ref{eq EP traditional OQS}). 
As explained in the previous section, Eq.~(\ref{eq EP basic basic}) is more general than Eq.~(\ref{eq EP basic}), 
which is more general than Eq.~(\ref{eq EP traditional}), which is more general than Eq.~(\ref{eq EP traditional OQS}), 
with each one following from the previous one in its respective range of validity, and we demand that this 
\emph{hierarchy} of second laws is reproduced in the microscopic derivation. We remark that, due to the relations 
extablished by the laws of thermodynamics, the five thermodynamic quantities we seek to define are not all independent. 

Third, as an important consistency check, we demand that the proposed definitions should reduce to well known results 
derived previously in and out of equilibrium. 

The above three criteria are certainly the most basic desiderata we can have about any microscopic derivation of the 
laws of thermodynamics. As it turns out, it is possible to strictly satisfy all of them for any Hamiltonian 
of the isolated system or the system-bath composite. 

We need, however, one assumption about the initial state. This assumption is mathematically specified 
later on, but here we explain \emph{why} we need one. The microscopic equations of motion, such as 
Eq.~(\ref{eq Liouville von Neumann}) or Newton's equation for classical systems, obey a property called 
\emph{time-reversal symmetry}. Roughly speaking, this means that to any solution of the dynamics with a given initial 
and final condition, it is possible to find a conjugate `twin solution' with initial and final condition 
\emph{exchanged} (Appendix~\ref{sec time reversal} gives a precise account of time-reversal symmetry). 
Thus, if thermodynamic entropy increases for the first solution, it must decrease for the conjugate 
twin solution. Consequently, ``the second law can never be proved mathematically by means of the equations of dynamics 
alone,'' as Boltzmann stressed already~\cite{BoltzmannNature1895}. 

The reason why we see no violations of the second law in our daily life comes from the fact that initial conditions, 
which generate a spontaneous entropy decrease, are extremely hard to prepare experimentally, see 
Fig.~\ref{fig entropy decrease} for an illustration. Mathematically, these `unnatural' states, which are very 
hard to prepare, need to be excluded by a proper choice of initial state specified later on. 

\begin{figure}
 \centering\includegraphics[width=0.49\textwidth,clip=true]{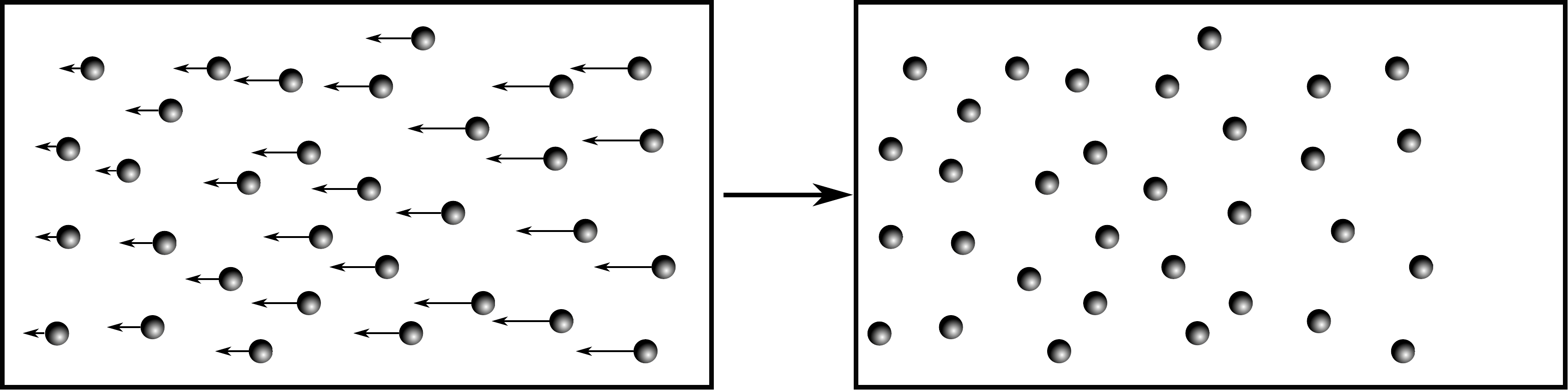}
 \label{fig entropy decrease} 
 \caption{Time evolution of gas particles in a box with perfectly reflecting walls. \emph{Left:} Initially, all gas 
 particles have a velocity pointing to the left such that in the next time step none of them is reflected to the right. 
 This is an extremely unlikely state and an experimental preparation of it requires precise control about every single 
 gas particle. \emph{Right:} Given the initial state on the left, the state of the gas after the time step is 
 characterized by a lower entropy, in \emph{seeming} violation to the second law of thermodynamics. }
\end{figure}

We remark that this is not the only way to derive the second law microscopically. It is also possible to consider 
\emph{arbitrary} initial states, but in this case the second law can only be established by imposing 
restrictions on the Hamiltonian. We do not consider this approach in detail here, but see 
Refs.~\cite{VonNeumann1929, VonNeumannEPJH2010, GoldsteinEtAlEPJH2010, GoldsteinHaraTasakiArXiv2013, JinEtAlPRE2016, 
IyodaKanekoSagawaPRL2017, GemmerKnipschildSteinigewegArXiv2017} for an exposition of historical and recent 
ideas and discussions in this direction. We remark that the five thermodynamic quantities defined below nonetheless 
remain meaningful for this different approach. 

Finally, one might wonder how the second law can emerge at all in a universe with time-reversal symmetric evolution 
equations. The most likely explanation is that the universe started off in a state with extremely low entropy. Thus, 
the second law seems to be a consequence of the boundary conditions. This conjecture is known as the \emph{past 
hypothesis}. An informal discussion of the microscopic origin of the second law and the arrow of time is given in 
Ref.~\cite{LebowitzPhysToday1993}. 

\subsection{Outline}

We start with the definition of internal energy and mechanical work in isolated systems in 
Sec.~\ref{sec energy work}, which are clearly the most uncontroversial definitions. 
Afterwards, we review various microscopic notions of thermodynamic entropy for an isolated system in 
Sec.~\ref{sec entropies} and we argue for a concept called \emph{observational entropy} as the most appropriate 
candidate. 
Equipped with this concept, we then establish the second law of thermodynamics for isolated systems in 
Sec.~\ref{sec entropy second law isolated}. Also the notion of an effective nonequilibrium temperature is introduced 
there. 
This is followed by a detailed derivation of the laws of thermodynamics in open systems in 
Sec.~\ref{sec first second law open}. 
In Sec.~\ref{sec further extensions} we report on further extensions of our framework, including the treatment of 
multiple heat baths and particle exchanges. 
Section~\ref{sec fluctuation theorems} is devoted to the derivation of fluctuation theorems, which generalize previous 
results by extending the notion of entropy production to single stochastic trajectories recorded in an experiment. 
The final Sec.~\ref{sec final} contains some concluding reflections. 
Two appendices about time-reversal symmetry (Appendix~\ref{sec time reversal}) and basic information theory concepts 
(Appendix~\ref{sec basics information theory})  accompany this tutorial for self-containedness. 

\section{Internal energy and mechanical work in isolated systems}
\label{sec energy work}

For an isolated system we identify the expectation value of its Hamiltonian with the internal energy appearing in 
phenomenological thermodynamics, 
\begin{equation}\label{eq def internal energy}
 U(t) \equiv \mbox{tr}\{H(\lambda_t)\rho(t)\}.
\end{equation}
We remark that definition~(\ref{eq def internal energy}) is an assumption, but we are not aware of any attempt to 
define internal energy differently. 

If the system is not driven ($\dot\lambda_t = 0$), its internal energy is conserved since the Hamiltonian is a 
constant of motion: $\Delta U(t) = 0$. Here and in the following, we use the notation $\Delta X(t) = X(t) - X(0)$ to 
denote the change of any time-dependent state function $X(t)$. 
If the system is driven, its internal energy can change in time: 
\begin{equation}\label{eq internal energy change}
 \Delta U(t) = \mbox{tr}\{H(\lambda_t)\rho(t)\} - \mbox{tr}\{H(\lambda_0)\rho(0)\}.
\end{equation}
Since the system is isolated (i.e., only coupled to a work reservoir), no heat is flowing ($Q=0$) and 
the phenomenological first law~(\ref{eq 1st law}) forces us to identify the change in internal energy with 
the work supplied to the system: 
\begin{equation}\label{eq def work}
 \Delta U(t) = W(t).
\end{equation}
This is the first law of thermodynamics for an isolated system. A quick calculation, using 
Eq.~(\ref{eq Liouville von Neumann}) and that the trace is cyclic, reveals that the work can be expressed as 
\begin{equation}
 \begin{split}\label{eq def work and power}
  W(t) &= \int_0^t ds \frac{d}{ds} \mbox{tr}\{H(\lambda_s)\rho(s)\} \\
       &= \int_0^t ds \mbox{tr}\left\{\frac{\partial H(\lambda_s)}{\partial s}\rho(s)\right\} = \int_0^t ds \dot W(s)
 \end{split}
\end{equation}
with the instantaneously supplied power $\dot W(s)$. 

To conclude, the identifcation of mechanical work in an isolated system follows solely from the phenomenological 
first law together with the assumption~(\ref{eq def internal energy}). 

\section{Microscopic notions of thermodynamic entropy}
\label{sec entropies}

The central concept in both, thermodynamics and statistical mechanics alike, is \emph{entropy}. Unfortunately, it is 
also the most debated concept, which got constantly mystified during the history of science. Here, we review three 
important entropy concepts in statistical mechanics: the Gibbs-Shannon-von Neumann entropy, the Boltzmann entropy and 
observational entropy. The last candidate unifies the previous two concepts and it will be our microscopic choice for 
thermodynamic entropy in the following, in and out of equilibrium. We argue that this choice resonates with recent 
findings in nonequilibrium statistical mechanics and extends ideas expressed by Boltzmann, Gibbs, von Neumann, 
Wigner, Jaynes, among others. 

\subsection{Gibbs-Shannon-von Neumann entropy}

The Gibbs-Shannon-von Neumann entropy of a state $\rho$, regardless whether it is in or out of equilibrium, reads 
\begin{equation}\label{eq def von Neumann entropy}
 S_\text{vN}(\rho) \equiv -\mbox{tr}\{\rho\ln\rho\}.
\end{equation}
Since we are interested in quantum systems throughout this manuscript, we used a subscript `vN' and mostly call 
it von Neumann entropy for brevity. 

The success of Eq.~(\ref{eq def von Neumann entropy}) for applications in (classical and quantum) information and 
communication theory is undeniable~\cite{CoverThomasBook1991, NielsenChuangBook2000, ManciniWinterBook2020}. 
It has many useful properties and many information theory concepts are directly related to 
it; those which are useful for the purposes of the present manuscript are reviewed in 
Appendix~\ref{sec basics information theory}. 

One of these properties says that the von Neumann entropy is invariant under unitary evolution, i.e., for any state 
$\rho$ and any unitary $U$ 
\begin{equation}\label{eq von Neumann entropy conservation}
  S_\text{vN}(\rho) = S_\text{vN}(U\rho U^\dagger).
\end{equation}
Consequently, if we were to interpret von Neumann entropy as thermodynamic entropy (times the Boltzmann factor $k_B$, 
which we set to one in the following), then, from Eq.~(\ref{eq time evo isolated}), we would conclude that the 
thermodynamic entropy of every isolated system is always constant. This conflicts with empirical facts, which show that 
most spontaneous processes are accompanied with a \emph{strict increase} in thermodynamic entropy (e.g., the free 
expansion of a gas, the mixing of liquids or the evolution of the cosmological universe). 

Thus, von Neumann entropy can not correspond to thermodynamic entropy \emph{in general}. This fact was clearly 
recognized by von Neumann himself, who confessed that Eq.~(\ref{eq def von Neumann entropy}) is ``not applicable'' to 
problems in statistical mechanics as it is ``computed from the perspective of an observer who 
can carry out all measurements that are possible in principle''~\cite{VonNeumann1929} 
(translated in Ref.~\cite{VonNeumannEPJH2010}). 

Importantly, we did \emph{not} say that von Neumann entropy \emph{never} coincides with thermodynamic entropy. In fact, 
it does so in two important cases. 

The first case corresponds to a system at equilibrium, which can be described by the Gibbs ensemble 
\begin{equation}\label{eq canonical}
 \pi(\beta) \equiv \frac{e^{-\beta H}}{\C Z(\beta)}, ~~~ \C Z(\beta) \equiv \mbox{tr}\{e^{-\beta H}\},
\end{equation}
or a generalization thereof, e.g., the grand canonical ensemble if particle numbers are important. In this case, 
$S_\text{vN}[\pi(\beta)]$ coincides with thermodynamic entropy. We remark that this conclusion is only valid if the 
system obeys the \emph{equivalence of ensembles}~\cite{TouchetteJSP2015}. Beyond that, even the foundations of 
equilibrium statistical mechanics remain debated (see, e.g., Refs.~\cite{DunkelHilbertNP2014, CampisiPRE2015, 
AbrahamPrenrosePRE2017, SwendsenPRR2018} for recent research on the correct definition of equilibrium temperature). 

The second case is given by \emph{small open} systems, which are in \emph{weak} contact with a large thermal bath. 
Then, von Neumann entropy (or its classical counterpart, the Gibbs-Shannon entropy) coincides with thermodynamic 
entropy even \emph{out of equilibrium}. This became consensus in classical stochastic~\cite{SekimotoBook2010, 
SeifertRPP2012, VandenBroeckEspositoPhysA2015} and quantum thermodynamics~\cite{KosloffEntropy2013, 
VinjanampathyAndersCP2016}. It was subject to a direct experimental test~\cite{GavrilovChetriteBechhoeferPNAS2017} 
and experimental confirmations of Landauer's principle further support this hypothesis~\cite{OrlovEtAlJJAP2012, 
BerutEtAlNature2012, JunGavrilovBechhoeferPRL2014, SilvaEtAlPRSA2016, HongEtAlSciAdv2016, YanEtAlPRL2018}. 

\subsection{Boltzmann entropy}

The second well-known microscopic candidate for thermodynamic entropy is Boltzmann's entropy. To define it precisely, 
we consider a special case of later relevance. Let $H$ be the Hamiltonian of an isolated system, where we dropped 
any dependence on external parameters $\lambda_t$ for notational simplicity. We write the stationary Schr\"odinger 
equation as 
\begin{equation}\label{eq Schroedinger stationary}
 H|E_i,\ell_i\rangle = E_i|E_i,\ell_i\rangle,
\end{equation}
where $|E_i,\ell_i\rangle$ denotes an energy eigenstate with eigenenergy $E_i$ and $\ell_i$ labels possible exact 
degeneraries. Now, imagine an isolated system with many components such that its associated Hilbert space is extremely 
large. For all practical purposes, it is then impossible that a measurement of the energy is so precise that it 
yields a unique eigenenergy $E_i$. Instead, any measurement has a finite resolution or uncertainty $\delta$, which 
can be mathematically captured by a projector of the form 
\begin{equation}\label{eq projector microcanonical}
 \Pi_E \equiv \sum_{E_i\in[E,E+\delta)} \sum_{\ell_i} |E_i,\ell_i\rl E_i,\ell_i|.
\end{equation}
These projectors form a complete and orthogonal set $\{\Pi_E\}$, i.e., $\sum_E \Pi_E = 1$ (with $1$ the identity 
operator) and $\Pi_E\Pi_{E'} = \delta_{E,E'}\Pi_E$ (with $\delta_{E,E'}$ the Kronecker delta). 
This describes a \emph{coarse-grained measurement}. 

Now, if such a coarse-grained measurement yields outcome $E$, the Boltzmann entropy of the system is 
\begin{equation}\label{eq Boltzmann entropy}
 S_B(E) \equiv \ln V_E,
\end{equation}
where $V_E = \mbox{tr}\{\Pi_E\}$ is the rank of the projector, called in the following also a \emph{volume term}. 
Thus, the Boltzmann entropy counts all possible microstates compatible with the constraint of knowing $E$, 
and then takes the logarithm of it (remember that $k_B\equiv1$). 

Clearly, if information about further macrocopic variables is available, e.g., the particle number $N$, then the 
Boltzmann entropy becomes 
\begin{equation}\label{eq Boltzmann entropy 2}
 S_B(E,N,\dots) \equiv \ln V_{E,N,\dots},
\end{equation}
where $V_{E,N,\dots}$ counts all possible microstates compatible with the constraints $E$, $N$, etc. We remark that 
the precise definition of $V_{E,N,\dots}$ is subtle if the corresponding observables do not commute. However, for the 
majority of applications in macroscopic thermodynamics, the corresponding observables commute at least approximately. 

A distinctive feature of Boltzmann's entropy compared to the von Neumann entropy is that it is nonzero even 
for a \emph{pure} state $\rho = |\psi\rl\psi|$. For instance, if the pure state is confined to the energy shell 
$[E,E+\delta)$, i.e., $\Pi_E|\psi\rangle = |\psi\rangle$, one confirms that 
\begin{equation}
 S_B(E) = \ln V_E \neq S_\text{vN}[|\psi\rl\psi|] = 0. 
\end{equation}

Moreover, Boltzmann's concept easily allows to explain the second law, even without the need to introduce any notion of 
ensembles. For an isolated system it is much more probable to evolve from a region of small volume towards a region of 
large volume and to reside for long times in the region with the largest volume, which is identified with 
thermodynamic equilibrium. This explains the increase of entropy after lifting a constraint and the 
tendency to find systems in a state of maximum entropy. 

The power and simplicity of Boltzmann's concept is so appealing that many researchers have univocally adapted the idea 
to identify Boltzmann's entropy with the phenomenological thermodynamic entropy for macroscopic systems, even out of 
equilibrium. Perhaps surprisingly, also Jaynes was a proponent of it~\cite{JaynesInBook1988, 
JaynesInBook1989, JaynesInBook1992}. 

In his words, ``we feel quickly that [Eq.~(\ref{eq Boltzmann entropy 2})]  must be correct, because of the light that 
this throws on our problem. Suddenly, the mysteries evaporate; the meaning of Carnot's principle, the reason for the 
second law, and the justification for Gibbs' variational principle, all become obvious'' (stated below Eq.~(17) in 
Ref.~\cite{JaynesInBook1988}) and ``the above arguments make it clear that [...] any macrostate---equilibrium or 
nonequilibrium---has an entropy~[(\ref{eq Boltzmann entropy 2})]'' (stated above Eq.~(25) in 
Ref.~\cite{JaynesInBook1989}).

Indeed, it is easy to recognize that Boltzmann's entropy fits well Jaynes' epistemological view on the second law for 
two resons. First, for the computation of Boltzmann's entropy ``it is necessary to decide at the outset of a 
problem which macroscopic variables or degrees of freedom we shall measure and/or control''~\cite{JaynesInBook1992}, 
where the ``macrosopic variables'' in Jaynes' language are our observables $E$, $N$, etc. Second, if these observables 
are fixed to a given accuracy, then the state reflecting maximum ignorance about the situation (i.e., maximum entropy 
in the information theory sense), is given by the microcanonical ensemble. If only the energy $E$ is known, this 
microcanonical ensemble reads 
\begin{equation}\label{eq microcanonical}
 \omega(E) \equiv \frac{\Pi_E}{V_E},
\end{equation}
which satisfies $S_\text{vN}[\omega(E)] = \ln V_E = S_B(E)$. 

Albeit also favouring the Boltzmann entropy, the purely epistemological nature of the second law is denied in 
Refs.~\cite{LebowitzPhysToday1993, GoldsteinEtAlInBook2020} by pointing out that the flow of heat from hot to cold in 
macroscopic systems is a \emph{fact}, which does not depend on the observer's state of knowledge. That is to say, 
one expects the laws of thermodynamics to be generically true, either on a distant planet (about which we have no 
knowledge) or in an isolated many-body system (where we might be able to prepare pure states). 

Independent of the reader's opinion on that matter (even the present authors do not fully agree on it), 
we find it important to point out that also Boltzmann's approach 
faces deficiencies in light of current experiments. In fact, as stressed above, there is an agreement in favor of von 
Neumann's (or Shannon's) entropy for small systems in weak contact with a thermal bath. Since 
todays nanotechnologies allow to make very precise measurements on small systems, the volume term appearing in 
Boltzmann's entropy can be one and hence, it's logarithm is zero. Therefore, Boltzmann entropy seems to be inadequate 
to take into account microscopic information, which is available to us now, but was not available a hundred years ago. 

To conclude, whereas von Neumann entropy appears \emph{too fine-grained} for all systems, which have more than 
a few degrees of freedom, Boltzmann's entropy appears \emph{too coarse-grained} to account for today's 
experimental capabilities. This is also once more exemplified in Fig.~\ref{fig entropy comparison}. It therefore seems 
desirable to have a flexible concept for entropy, which can interpolate between these two ideas. 

\begin{figure}
 \centering\includegraphics[width=0.49\textwidth,clip=true]{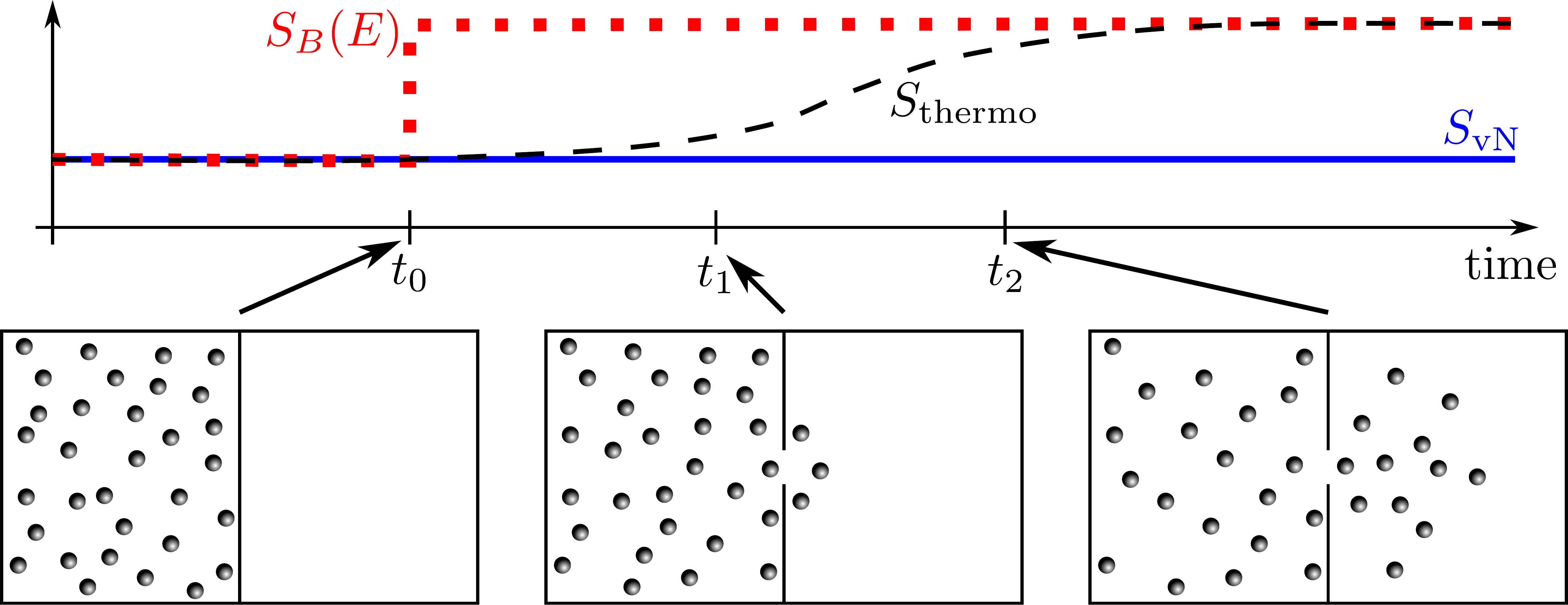}
 \label{fig entropy comparison} 
 \caption{Thought experiment with a gas in a box with perfectly reflecting boundaries. Initially, the gas is confined 
 to the left half of the box. Then, at $t_0$ a small hole is opened in the wall such that gas particles can diffuse 
 to the right. Since the gas needs time to diffuse, it seems sensible to demand that thermodynamic entropy should 
 smoothly interpolate between the lower initial and higher final value (dashed line). In contrast to this desideratum, 
 von Neumann entropy stays constant for all times (thick blue line). A naive application of Boltzmann's entropy 
 $S_B(E)$ captures the correct initial and final value, but contains a sudden discontinous jump at $t_0$ even 
 if the hole in the wall is very small. Thus, it misses some relevant dynamical information. For a similar and more 
 detailed discussion see Ref.~\cite{SafranekAguirreDeutschPRE2020}. }
\end{figure}

\subsection{Observational entropy}
\label{sec sub obs ent}

We now review a third concept, which is called observational (or coarse-grained) entropy and which overcomes the 
problems mentioned above. We begin with its formal definition, followed by the recapitulation of some useful 
mathematical facts needed later on, and we end with remarks about its appearance in the literature. 

\subsubsection{Formal definition}

We ignore any thermodynamic considerations for the moment and consider some \emph{coarse-graining} $X = \{\Pi_x\}$ 
defined by a complete set of orthogonal projectors satisfying $\sum_x \Pi_x = 1$ and 
$\Pi_x\Pi_{x'} = \delta_{x,x'}\Pi_x$. This coarse-graining can be associated to a measurement of a suitable 
observable, but the eigenvalues of the observable are unimportant for us. Instead, if the system is in state $\rho$, 
we only need the probability $p_x = \mbox{tr}\{\Pi_x\rho\}$ to observe outcome $x$ and the volume term 
$V_x \equiv \mbox{tr}\{\Pi_x\}$. Then, observational entropy with respect to a coarse-graining $X$ is defined as 
\begin{equation}\label{eq def obs ent}
 S_\text{obs}^X(\rho) \equiv \sum_x p_x (-\ln p_x + \ln V_x).
\end{equation}

To convince ourselves that observational entropy interpolates between the notions of von Neumann and Boltzmann 
entropy, we consider the following two cases. 

First, assume that we are an observer who, in von Neumann's words~\cite{VonNeumann1929, VonNeumannEPJH2010}, 
``can carry out all measurements that are possible in principle.'' Then, we could choose a coarse-graining 
$X = \{|x\rl x|\}$, which matches the eigenbasis of the state $\rho = \sum_x \lambda _x |x\rl x|$. We immediately 
reveal that in this case $S_\text{obs}^X(\rho) = S_\text{vN}(\rho)$. 

Second, observe that we can write observational entropy as 
\begin{equation}
 S_\text{obs}^X(\rho) = S_\text{Sh}(p_x) + \sum_x p_x S_B(x),
\end{equation}
where $S_\text{Sh}(p_x)$ is the Shannon entropy of the probabilities $p_x$ and the second term presents an averaged 
Boltzmann entropy. Thus, if $p_x = \delta_{x,x'}$, i.e., we are certain that the state $\rho$ is 
confined in the `macrostate' $\Pi_{x'}$, we obtain $S_\text{obs}^X(\rho) = S_B(x')$. Depending on the coarse-graining 
$X$, this allows us to reproduce, e.g., Boltzmann's entropy~(\ref{eq Boltzmann entropy}) 
associated to an imprecise energy measurement. 

The last point about the correct choice of coarse-graining is very important. Definition~(\ref{eq def obs ent}) 
formally holds for \emph{any} coarse-graining. To connect observational entropy to \emph{thermodynamic} entropy, we 
need to make the right choice of coarse-graining just as Jaynes indicated by saying that ``it is necessary to decide 
at the outset of a problem which macroscopic variables or degrees of freedom we shall measure and/or 
control''~\cite{JaynesInBook1992}. The difference is that observational entropy is not restricted to 
``macroscopic variables [= coarse-grainings],'' but can take into account more detailed information as well. 
Note that the correct choice of variables (energy, polarization, particle number, etc.) is often determined by 
the physical setup itself, whereas the level of `coarse-grainedness' quantified by the volume terms $V_x$ depends 
experimentally on the \emph{precision} of the measurement. Theoretically, this precision is a free parameter in 
principle, which has to be chosen reasonably. Luckily, one also expects that the qualitative picture does not 
sensitively depend on the precise choice of $V_x$. 

\subsubsection{Some elementary mathematical properties}

We now list a couple of mathematical facts as lemmas, which appear scattered throughout the literature (see the 
next subsection for a list of references). These lemmas add further appeal to the definition of observational 
entropy. They hold for any coarse-graining $\{\Pi_x\}$ and therefore might be of interest even outside thermodynamic 
considerations. Below, we also make use of the quantum relative entropy 
$D(\rho\|\sigma) \equiv \mbox{tr}\{\rho(\ln\rho-\ln\sigma)\}$, see 
Appendix~\ref{sec basics information theory} for further details. 

First, observational entropy can be bounded from above and below. 

\begin{lemma}
 If $\dim\C H < \infty$ denotes the dimension of the Hilbert space of the isolated system, then 
 \begin{equation}
 S_\text{\normalfont vN}(\rho) \le S_\text{\normalfont obs}^X(\rho) \le \ln\dim\C H. 
\end{equation}
\end{lemma}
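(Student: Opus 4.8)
The plan is to route both inequalities through a single auxiliary state, the coarse-grained (block-maximally-mixed) density matrix associated with the coarse-graining $X$ and the state $\rho$,
$$\hat\rho_X \equiv \sum_x \frac{p_x}{V_x}\Pi_x, \qquad p_x = \mbox{tr}\{\Pi_x\rho\}, \quad V_x = \mbox{tr}\{\Pi_x\}.$$
First I would verify that $\hat\rho_X$ is a legitimate density matrix: it is manifestly positive semidefinite, and $\mbox{tr}\{\hat\rho_X\} = \sum_x p_x = 1$ by completeness of $\{\Pi_x\}$. Because the $\Pi_x$ are mutually orthogonal projectors of rank $V_x$, the operator $\hat\rho_X$ is block diagonal, and for each $x$ with $p_x>0$ its spectrum contains the eigenvalue $p_x/V_x$ with multiplicity $V_x$.

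The crucial observation, which I would establish next, is the identity $S_\text{obs}^X(\rho) = S_\text{vN}(\hat\rho_X)$. This follows by evaluating the von Neumann entropy directly from the spectrum just identified,
$$S_\text{vN}(\hat\rho_X) = -\sum_x V_x\,\frac{p_x}{V_x}\ln\frac{p_x}{V_x} = \sum_x p_x\left(-\ln p_x + \ln V_x\right),$$
which is exactly the definition~(\ref{eq def obs ent}). Once this identity is in hand, both inequalities reduce to statements about the ordinary von Neumann entropy.

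For the upper bound I would simply invoke that the von Neumann entropy of any state on a $d$-dimensional Hilbert space is at most $\ln d$, so that $S_\text{obs}^X(\rho) = S_\text{vN}(\hat\rho_X) \le \ln\dim\C H$. For the lower bound I would use non-negativity of the quantum relative entropy (Klein's inequality), $D(\rho\|\hat\rho_X)\ge 0$. Since $\ln\hat\rho_X = \sum_x \ln(p_x/V_x)\,\Pi_x$ on its support, the orthogonality of the $\Pi_x$ gives $\mbox{tr}\{\rho\ln\hat\rho_X\} = \sum_x p_x\ln(p_x/V_x)$, whence
$$0 \le D(\rho\|\hat\rho_X) = -S_\text{vN}(\rho) + S_\text{obs}^X(\rho),$$
which rearranges to the claimed lower bound $S_\text{vN}(\rho)\le S_\text{obs}^X(\rho)$.

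The main obstacle is bookkeeping on the supports of the operators rather than any sharp estimate. I would need to treat outcomes with $p_x = 0$ with care: for such $x$, positivity of $\rho$ forces $\Pi_x\rho\Pi_x = 0$ and hence $\rho$ has no support in the range of $\Pi_x$, so these terms drop out consistently and the support of $\rho$ lies inside that of $\hat\rho_X$, keeping $D(\rho\|\hat\rho_X)$ finite. I would also confirm that $\ln\hat\rho_X$ acts block by block exactly as written, which is immediate from $\Pi_x\Pi_{x'} = \delta_{x,x'}\Pi_x$.
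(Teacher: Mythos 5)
Your proof is correct. The paper actually states this lemma without an explicit proof, but the machinery it develops two lemmas later is the natural comparison point: the identity~(\ref{eq lemma 1}) splits $S_\text{obs}^X(\rho)-S_\text{vN}(\rho)$ into the entropy gain of the projective measurement, $S_\text{vN}[\sum_x p_x\rho(x)]-S_\text{vN}(\rho)\ge 0$, plus the averaged relative entropy $\sum_x p_x D[\rho(x)\|\omega(x)]\ge 0$, which yields the lower bound. Your single term $D(\rho\|\hat\rho_X)$ with $\hat\rho_X=\sum_x p_x\omega(x)$ is exactly the sum of those two contributions (a short computation confirms $D(\rho\|\hat\rho_X)=S_\text{obs}^X(\rho)-S_\text{vN}(\rho)$ equals the paper's two-term decomposition), so the routes are algebraically equivalent; yours is more compact, while the paper's finer split is what it later needs to characterize the equality case in Lemma~\ref{lemma S obs equal S vN}. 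For the upper bound, which the paper nowhere spells out, your identification $S_\text{obs}^X(\rho)=S_\text{vN}(\hat\rho_X)\le\ln\dim\C H$ is clean and correct; an equally quick alternative avoiding the auxiliary state is Jensen's inequality applied to the concave logarithm, $\sum_x p_x\ln(V_x/p_x)\le\ln\sum_x V_x=\ln\dim\C H$. Your treatment of outcomes with $p_x=0$ (so that $\rho$ is supported inside the support of $\hat\rho_X$ and the relative entropy stays finite) is the right piece of bookkeeping and is handled correctly.
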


Second, observational entropy is extensive in the limit where one expects it to be extensive. 

\begin{lemma}
 Consider a composite system in the decorrelated state $\rho = \rho_1\otimes\dots\otimes\rho_n$ and a composite 
 coarse-graining $X = X_1\otimes\dots\otimes X_n$ with projectors $\Pi_{x_1}\otimes\dots\otimes\Pi_{x_n}$. Then, 
 \begin{equation}\label{eq extensivity}
  S_\text{\normalfont obs}^X(\rho) = \sum_{j=1}^n S_\text{\normalfont obs}^{X_j}(\rho_j).
 \end{equation}
\end{lemma}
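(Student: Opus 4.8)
The plan is to exploit the fact that a product state together with a product coarse-graining makes both the outcome probabilities and the volume terms multiplicative across the factors, after which the multi-index sum collapses by normalization. I would label the composite outcomes by a multi-index $\bm{x}=(x_1,\dots,x_n)$ with projector $\Pi_{\bm{x}} = \Pi_{x_1}\otimes\cdots\otimes\Pi_{x_n}$. The two inputs to Eq.~(\ref{eq def obs ent}) are $p_{\bm{x}} = \mbox{tr}\{\Pi_{\bm{x}}\rho\}$ and $V_{\bm{x}} = \mbox{tr}\{\Pi_{\bm{x}}\}$. Since the trace of a tensor product factorizes into the product of the traces of its factors, and since $\rho=\rho_1\otimes\cdots\otimes\rho_n$ is itself a product, both inputs factorize,
\begin{equation}
 p_{\bm{x}} = \prod_{j=1}^n p_{x_j}, \qquad V_{\bm{x}} = \prod_{j=1}^n V_{x_j},
\end{equation}
with the single-site quantities $p_{x_j} = \mbox{tr}\{\Pi_{x_j}\rho_j\}$ and $V_{x_j} = \mbox{tr}\{\Pi_{x_j}\}$.

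Taking logarithms then turns these products into sums, so that the summand of the observational entropy splits as $-\ln p_{\bm{x}} + \ln V_{\bm{x}} = \sum_{j=1}^n(-\ln p_{x_j} + \ln V_{x_j})$. Inserting this together with $p_{\bm{x}} = \prod_k p_{x_k}$ into Eq.~(\ref{eq def obs ent}) produces a sum over $\bm{x}$ of a product of marginal probabilities times a sum over the site index $j$. I would then interchange the two sums and, for each fixed $j$, factor out the $j$-th per-site contribution; the product of the remaining marginal probabilities over the spectator sites $k\neq j$ sums to one by normalization, $\sum_{x_k} p_{x_k} = \mbox{tr}\{\rho_k\} = 1$. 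What survives is precisely
\begin{equation}
 S_\text{obs}^X(\rho) = \sum_{j=1}^n \sum_{x_j} p_{x_j}\left(-\ln p_{x_j} + \ln V_{x_j}\right) = \sum_{j=1}^n S_\text{obs}^{X_j}(\rho_j),
\end{equation}
which is Eq.~(\ref{eq extensivity}).

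There is no genuine obstacle here, as the statement follows by direct computation. The only points demanding care are the bookkeeping of the multi-index summations and the explicit use of normalization of each marginal distribution $\{p_{x_j}\}$---valid because each $\rho_j$ is a normalized state and $\{\Pi_{x_j}\}$ a complete orthogonal set---to collapse the product over the spectator sites. I would emphasize that no correlations or cross terms appear anywhere in the manipulation, which is exactly why extensivity holds only for the decorrelated product state assumed in the hypothesis.
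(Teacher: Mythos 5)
Your proof is correct: the factorization of $p_{\bm{x}}$ and $V_{\bm{x}}$ over the tensor factors, followed by the logarithm splitting and the collapse of the spectator sums via normalization, is exactly the direct computation this lemma calls for. The paper states the lemma without proof, so there is no alternative argument to compare against; your calculation is the natural one and fills that gap cleanly.
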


Of course, one expects Eq.~(\ref{eq extensivity}) to remain approximately true for weakly correlated system, which 
describe multiple macroscopic systems in contact with each other. If surface properties are negligible compared to 
their bulk properties, this then implies the usual notion of extensivity known from thermodynamics. 

Next, we note a useful rewriting of observational entropy. For that purpose, we introduce the notation 
$\rho(x) \equiv \Pi_x\rho\Pi_x/p_x$, which describes the post-measurement state given outcome $x$, 
and $\omega(x) \equiv \Pi_x/V_x$, which denotes a generalized `microcanonical ensemble' given the constraint $x$. 

\begin{lemma}
 We have 
 \begin{equation}\label{eq lemma 1}
  S_\text{\normalfont obs}^X(\rho) = 
  S_\text{\normalfont vN}\left[\sum_x p_x\rho(x)\right] + \sum_x p_x D[\rho(x)\|\omega(x)].
 \end{equation}
\end{lemma}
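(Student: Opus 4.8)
The plan is to recognize the first term on the right-hand side as the von Neumann entropy of the \emph{dephased} state $\bar\rho \equiv \sum_x p_x\rho(x) = \sum_x \Pi_x\rho\Pi_x$ and to exploit its block-diagonal structure. Because the projectors are orthogonal, $\Pi_x\Pi_{x'} = \delta_{x,x'}\Pi_x$, the blocks $\Pi_x\rho\Pi_x = p_x\rho(x)$ are supported on mutually orthogonal subspaces, so $\bar\rho$ is a direct sum of these blocks. Its eigenvalues are therefore the products $p_x\lambda_i^{(x)}$, where the $\lambda_i^{(x)}$ are the eigenvalues of $\rho(x)$, and evaluating $-\mbox{tr}\{\bar\rho\ln\bar\rho\}$ with $\ln(p_x\lambda_i^{(x)}) = \ln p_x + \ln\lambda_i^{(x)}$ yields the standard decomposition $S_\text{vN}(\bar\rho) = S_\text{Sh}(p_x) + \sum_x p_x S_\text{vN}[\rho(x)]$, with $S_\text{Sh}(p_x) = -\sum_x p_x\ln p_x$.

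The second step is to evaluate the relative-entropy term. Since $\omega(x) = \Pi_x/V_x$ acts as the scalar $1/V_x$ on the range of $\Pi_x$, one has $\ln\omega(x) = -(\ln V_x)\Pi_x$ there. As $\rho(x) = \Pi_x\rho\Pi_x/p_x$ is supported inside the range of $\Pi_x$, it follows that $\mbox{tr}\{\rho(x)\ln\omega(x)\} = -\ln V_x$, with no contribution from the kernel of $\Pi_x$ where $\ln\omega(x)$ would diverge. Hence $D[\rho(x)\|\omega(x)] = \mbox{tr}\{\rho(x)\ln\rho(x)\} - \mbox{tr}\{\rho(x)\ln\omega(x)\} = -S_\text{vN}[\rho(x)] + \ln V_x$.

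Finally I would assemble the pieces. Averaging, $\sum_x p_x D[\rho(x)\|\omega(x)] = \sum_x p_x\ln V_x - \sum_x p_x S_\text{vN}[\rho(x)]$; adding this to $S_\text{vN}(\bar\rho)$ makes the averaged block entropies $\sum_x p_x S_\text{vN}[\rho(x)]$ cancel, leaving $-\sum_x p_x\ln p_x + \sum_x p_x\ln V_x = \sum_x p_x(-\ln p_x + \ln V_x)$, which is exactly $S_\text{obs}^X(\rho)$ by definition~(\ref{eq def obs ent}).

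The computation is essentially routine; the only points demanding care are the support and orthogonality bookkeeping. The main obstacle is justifying the block-diagonal entropy decomposition cleanly---that $\ln\bar\rho$ splits within each block as $\ln(p_x\rho(x))$ with no cross terms---which rests entirely on the orthogonality $\Pi_x\Pi_{x'} = \delta_{x,x'}\Pi_x$, and in ensuring $\ln\omega(x)$ is only ever evaluated on $\mathrm{supp}[\rho(x)]\subseteq\mathrm{range}(\Pi_x) = \mathrm{supp}[\omega(x)]$, so that each relative entropy is finite and well defined.
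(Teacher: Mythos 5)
Your proof is correct and follows essentially the same route as the paper: decompose $S_\text{vN}\left[\sum_x p_x\rho(x)\right]$ into the Shannon entropy of the $p_x$ plus the averaged block entropies, evaluate $D[\rho(x)\|\omega(x)] = -S_\text{vN}[\rho(x)] + \ln V_x$ using $\ln\omega(x) = -(\ln V_x)\Pi_x$ on the relevant support, and let the $\sum_x p_x S_\text{vN}[\rho(x)]$ terms cancel. The only difference is that you derive the block-diagonal entropy decomposition directly from the eigenvalue structure of $\sum_x \Pi_x\rho\Pi_x$, whereas the paper simply cites Theorem~11.10 of Nielsen and Chuang for the same identity.
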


\begin{proof}
 Since the states $\rho(x)$ have support on orthogonal subspaces, it follows from Theorem~11.10 in 
 Ref.~\cite{NielsenChuangBook2000} that 
 \begin{equation}
  S_\text{vN}\left[\sum_x p_x\rho(x)\right] = \sum_x p_x \left\{S_\text{vN}[\rho(x)] - \ln p_x\right\}. 
 \end{equation}
 Using this insight in Eq.~(\ref{eq lemma 1}) yields 
 \begin{equation}
  \begin{split}
   S_\text{obs}^X(\rho) &= -\sum_x p_x\left[\ln p_x + \mbox{tr}\{\rho(x)\ln\omega(x)\}\right] \\
                        &= -\sum_x p_x\left[\ln p_x - \mbox{tr}\{\rho(x)\ln V_x\}\right],
  \end{split}
 \end{equation}
 which is identical to Eq.~(\ref{eq def obs ent}) since $\mbox{tr}\{\rho(x)\} = 1$. 
\end{proof}

The next lemma characterizes the states $\rho$ which have the same von Neumann and observational 
entropy. 

\begin{lemma}\label{lemma S obs equal S vN}
 We have $S_\text{\normalfont vN}(\rho) = S_\text{\normalfont obs}^X(\rho)$ if and only if 
 \begin{equation}\label{eq lemma 2}
  \rho = \sum_x p_x \omega(x)
 \end{equation}
 for an arbitary set of probabilities $p_x$. 
\end{lemma}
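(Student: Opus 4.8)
The plan is to reduce the entire statement to the nonnegativity of two relative entropies by building on the rewriting already established in Eq.~(\ref{eq lemma 1}), so that the iff-condition is read off directly from when each term vanishes. First I would introduce the fully dephased state $\rho_D \equiv \sum_x \Pi_x\rho\Pi_x = \sum_x p_x\rho(x)$, i.e.\ the outcome of the projective measurement $\{\Pi_x\}$ when the result is not recorded. The key preliminary identity I would establish is
\begin{equation}
 S_\text{vN}(\rho_D) - S_\text{vN}(\rho) = D(\rho\|\rho_D) \ge 0,
\end{equation}
which holds because $\rho_D$ is block-diagonal with respect to $\{\Pi_x\}$, hence so is $\ln\rho_D$, and therefore $\mbox{tr}\{\rho\ln\rho_D\} = \mbox{tr}\{\rho_D\ln\rho_D\}$; subtracting $\mbox{tr}\{\rho\ln\rho\}$ yields exactly $D(\rho\|\rho_D)$. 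Nonnegativity, together with the equality condition $\rho = \rho_D$, then follows from the positivity of quantum relative entropy (Klein's inequality).

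Substituting $S_\text{vN}(\rho_D) = S_\text{vN}(\rho) + D(\rho\|\rho_D)$ into Eq.~(\ref{eq lemma 1}) converts that lemma into
\begin{equation}
 S_\text{obs}^X(\rho) - S_\text{vN}(\rho) = D(\rho\|\rho_D) + \sum_x p_x\, D[\rho(x)\|\omega(x)].
\end{equation}
Both summands on the right are nonnegative, so the equality $S_\text{vN}(\rho) = S_\text{obs}^X(\rho)$ forces each of them to vanish separately. Vanishing of $D(\rho\|\rho_D)$ gives $\rho = \rho_D$, i.e.\ block-diagonality, and vanishing of the weighted sum of nonnegative terms gives $\rho(x) = \omega(x)$ for every $x$ with $p_x>0$. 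Combining the two, $\rho = \rho_D = \sum_x p_x\rho(x) = \sum_x p_x\omega(x)$, which is the claimed form.

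For the converse I would simply insert $\rho = \sum_x p_x\omega(x)$ and check that both equality conditions hold: orthogonality of the $\Pi_x$ makes this $\rho$ block-diagonal, so $\rho = \rho_D$ and $D(\rho\|\rho_D)=0$; and a short computation gives $\rho(x) = \Pi_x\rho\Pi_x/p_x = \omega(x)$, so each $D[\rho(x)\|\omega(x)]$ vanishes. Hence the right-hand side above is zero and the two entropies coincide. The only mildly delicate point I anticipate is the preliminary dephasing identity — specifically justifying $\mbox{tr}\{\rho\ln\rho_D\} = \mbox{tr}\{\rho_D\ln\rho_D\}$ (which also quietly requires $\operatorname{supp}\rho \subseteq \operatorname{supp}\rho_D$, easily verified since $\langle\phi|\rho_D|\phi\rangle=0$ forces $\rho\Pi_x|\phi\rangle=0$ for all $x$) and pinning down that the equality case of Klein's inequality demands $\rho=\rho_D$ exactly. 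Once that identity is in hand, the iff-structure is immediate from the additivity of the two nonnegative contributions.
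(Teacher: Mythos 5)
Your proof is correct and follows essentially the same route as the paper: both start from Eq.~(\ref{eq lemma 1}) and decompose $S_\text{obs}^X(\rho)-S_\text{vN}(\rho)$ into the entropy gain under dephasing plus the average relative entropy $\sum_x p_x D[\rho(x)\|\omega(x)]$, then read off the equality conditions of the two non-negative terms. The only difference is that where the paper cites Theorem~11.9 of Nielsen and Chuang for the non-negativity and equality case of the first term, you derive it inline via the pinching identity $S_\text{vN}(\rho_D)-S_\text{vN}(\rho)=D(\rho\|\rho_D)$ (including the support check), and you verify the converse direction explicitly rather than leaving it implicit — both of which make the argument more self-contained without changing its substance.
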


\begin{proof}
 Using Eq.~(\ref{eq lemma 1}), we can write 
 \begin{equation}
  \begin{split}
   S_\text{obs}^X(\rho) - S_\text{vN}(\rho) 
   =&~ S_\text{\normalfont vN}\left[\sum_x p_x\rho(x)\right] - S_\text{vN}(\rho) \\
   &+ \sum_x p_x D[\rho(x)\|\omega(x)].
  \end{split}
 \end{equation}
 We see that $S_\text{obs}^X(\rho) - S_\text{vN}(\rho)$ is given as the sum of two non-negative terms because it 
 follows from Theorem~11.9 in Ref.~\cite{NielsenChuangBook2000} that 
 \begin{equation}\label{eq thm 11 9}
  S_\text{\normalfont vN}\left[\sum_x p_x\rho(x)\right] - S_\text{vN}(\rho) \ge 0
 \end{equation}
 with equality if and only if $\rho = \sum_x p_x\rho(x)$. Furthermore, $D[\rho(x)\|\omega(x)] = 0$ if and only if 
 $\rho(x) = \omega(x)$. Hence, Eq.~(\ref{eq lemma 2}) follows. 
\end{proof}

The next lemma can be seen as a precursor of the second law, albeit the coarse-graining $\{\Pi_x\}$ is still arbitrary 
and not necessarily of thermodynamic relevance. It applies to any isolated system with 
time evolution~(\ref{eq time evo isolated}). Since we are now interested in 
\emph{changes} in observational entropy, we write $S_\text{obs}^{X_t}(t) = S_\text{obs}^{X_t}[\rho(t)]$ for the 
observational entropy at time $t$ and indicate that also the chosen coarse-graining $X = X_t$ can depend on time. 

\begin{lemma}\label{lemma 2nd law}
 If $S_\text{\normalfont obs}^{X_0}(0) = S_\text{\normalfont vN}[\rho(0)]$, then 
 \begin{equation}\label{eq 2nd law formal}
  \Delta S_\text{\normalfont obs}^{X_t}(t) = S_\text{\normalfont obs}^{X_t}(t) - S_\text{\normalfont obs}^{X_0}(0) \ge 0. 
 \end{equation}
\end{lemma}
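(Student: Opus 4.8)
The plan is to reduce the statement to two facts already in hand: the universal lower bound $S_\text{vN}(\rho) \le S_\text{obs}^X(\rho)$ established in the first lemma of this subsection, and the unitary invariance of von Neumann entropy, Eq.~(\ref{eq von Neumann entropy conservation}). The strategy is to build a short chain of one inequality and two equalities that passes through the von Neumann entropy at the final time, transports it back to the initial time via unitarity, and then uses the hypothesis to convert it into the initial observational entropy.

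Concretely, I would first apply the lower bound to the evolved state with the (possibly time-dependent) coarse-graining $X_t$, which gives $S_\text{obs}^{X_t}(t) \ge S_\text{vN}[\rho(t)]$. Next, because the isolated system evolves unitarily, $\rho(t) = U(t,0)\rho(0)U^\dagger(t,0)$, the invariance property~(\ref{eq von Neumann entropy conservation}) yields $S_\text{vN}[\rho(t)] = S_\text{vN}[\rho(0)]$. Finally I would invoke the hypothesis $S_\text{vN}[\rho(0)] = S_\text{obs}^{X_0}(0)$ to close the chain,
\begin{equation}
 S_\text{obs}^{X_t}(t) \ge S_\text{vN}[\rho(t)] = S_\text{vN}[\rho(0)] = S_\text{obs}^{X_0}(0),
\end{equation}
which is precisely $\Delta S_\text{obs}^{X_t}(t) \ge 0$.

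There is no genuine analytical obstacle: the entire content is the combination of the lower bound with the fact that unitary dynamics preserves $S_\text{vN}$. The only conceptual point worth flagging is the indispensability of the initial-state hypothesis. Without $S_\text{obs}^{X_0}(0) = S_\text{vN}[\rho(0)]$, the inequality would merely compare the final observational entropy to $S_\text{vN}[\rho(0)]$, which may lie strictly below $S_\text{obs}^{X_0}(0)$, and the monotonicity would break down. By Lemma~\ref{lemma S obs equal S vN}, this hypothesis is equivalent to requiring that the initial state be a mixture of microcanonical ensembles, $\rho(0) = \sum_x p_x\omega(x)$, for the chosen initial coarse-graining $X_0$; this is exactly the `natural' initial condition that excludes the low-entropy `unnatural' preparations discussed earlier, and it is where the time-reversal-symmetry caveat is resolved.
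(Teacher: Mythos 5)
Your proof is correct and follows essentially the same route as the paper: the paper likewise reduces the claim to $S_\text{obs}^{X_t}(t) - S_\text{vN}[\rho(t)] \ge 0$ after using unitary invariance and the hypothesis, the only cosmetic difference being that it re-derives this lower bound from the decomposition~(\ref{eq lemma 1}) together with~(\ref{eq thm 11 9}) rather than citing the bound $S_\text{vN}(\rho)\le S_\text{obs}^{X}(\rho)$ of the first lemma directly.
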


\begin{proof}
 From the unitarity of time evolution we deduce 
 $S_\text{\normalfont vN}[\rho(0)] = S_\text{\normalfont vN}[\rho(t)]$ and hence, 
 \begin{equation}
  \Delta S_\text{obs}^{X_t}(t) = S_\text{obs}^{X_t}(t) - S_\text{\normalfont vN}[\rho(t)].
 \end{equation}
 This term is easily shown to be positive using Eqs.~(\ref{eq lemma 1}) and~(\ref{eq thm 11 9}). 
\end{proof}

\subsubsection{Historical remarks}

Definition~(\ref{eq def obs ent}) or, more often, similar but less general forms of it appear scattered 
throughout the literature on statistical mechanics. Not seldomly Eq.~(\ref{eq def obs ent}) is used in various 
computations without explicitly identifying it with thermodynamic entropy, in particular not out of 
equilibrium. Our efforts to trace back the origin and use of definition~(\ref{eq def obs ent}) has yielded the 
following results, which shall not imply that the given list is exhaustive. 

For classical systems, where one needs to coarse-grain the phase space into cells, variants of 
Eq.~(\ref{eq def obs ent}) appear already in the work of Gibbs~\cite{GibbsBook1902} and Lorentz~\cite{Lorentz1906}, 
see also Sec.~23a of the treatise about statistical mechanics of the Ehrenfests~\cite{EhrenfestEhrenfest1911}. In 
this context, Eq.~(\ref{eq def obs ent}) is also known as ``coarse-grained entropy'' (see Wehrl~\cite{WehrlRMP1978}, 
who connects it to ergodicity and mixing and cites further references). 

For quantum systems, Eq.~(\ref{eq def obs ent}) can be traced back to von Neumann, who attributes it to a personal 
communication from Wigner and clearly acknowledges its usefulness for problems in statistical 
mechanics~\cite{VonNeumann1929}. In fact, von Neumann proves a remarkable `$H$-theorem' in his work, which we 
summarize here informally (see also the accompanying article~\cite{GoldsteinEtAlEPJH2010} of the English 
translation~\cite{VonNeumannEPJH2010} for further details). For this purpose consider an isolated system 
with time-independent Hamiltonian $H$ and suppose that the Hamiltonian has no degenerate energy gaps. Furthermore, 
the orientation between the eigenvectors of $H$ and the eigenvectors of the coarse-graining $\{\Pi_x\}$ is assumed 
random. Finally, it is assumed that the projectors $\Pi_x$ are sufficiently `coarse,' i.e., the number of elements 
in the set $\{\Pi_x\}$ must be much smaller than $\dim\C H$. Then, von Neumann found that \emph{for all initial 
states $|\psi_0\rangle$ the observational entropy $S_\text{\normalfont obs}^X(|\psi_t\rl\psi_t|)$ will be close to 
its upper bound $\ln\dim\C H$ for most times $t$.} In particular, if the initial observational entropy 
$S_\text{obs}^X(|\psi_0\rl\psi_0|)$ is small, this proves the increase of entropy after waiting a sufficient amount 
of time. Von Neumann's $H$-theorem can 
be regarded as complementary to what we establish below: Whereas we rely on a special class of initial states, no 
assumption about the Hamiltonian $H$ is used. Finally, we remark that von Neumann's approach was recently 
extended~\cite{GoldsteinEtAlPRSA2010, RigolSrednickiPRL2012, ReimannPRL2015}, but the focus was on equilibration 
of expectation values, whereas his $H$-theorem related to observational entropy found no further attention. 

We continue by pointing out that a second-law-like increase of observational entropy similar to 
Lemma~\ref{lemma 2nd law} was proven for quantum systems in \S106 of Tolman's book~\cite{TolmanBook1938} and in 
Sec.~1.3.1 of the book by Zubarev \emph{et al.}~\cite{ZubarevMorozovRoepkeBook} (the proof in the classical case 
seems to date back to Gibbs, see again the Ehrenfests~\cite{EhrenfestEhrenfest1911}). It is, however, interesting to 
note that both books refuse to use Eq.~(\ref{eq def obs ent}) as a definition of thermodynamic entropy for 
out-of-equilibrium processes: Tolman discusses the connection to thermodynamic entropy only at equilibrium and prefers 
to use the Gibbs-Shannon-von Neumann entropy [compare with Eq.~(122.10) therein] and Zubarev \emph{et al.}~prefer 
the Gibbs-Shannon-von Neumann entropy of a generalized out-of-equilibrium Gibbs ensemble. Other sources, where 
observational entropy was sometimes more and sometimes less clearly identified as thermodynamic entropy, are 
Refs.~\cite{Pauli1928, PercivalJMP1961, PenroseRPP1979, LatoraBarangerPRL1999, NauenbergAJP2004, VanKampenBook2007, 
GemmerSteinigewegPRE2014, LeePRE2018}. 

The present tutorial was in particular inspired by the recent work of 
\u{S}afr\'anek, Deutsch and Aguirre~\cite{SafranekDeutschAguirrePRA2019a}, who coined the terminology ``observational 
entropy'' and propose it as a generally valid definition of thermodynamic entropy for isolated nonequilibrium quantum 
systems. Further arguments for it are also given in their subsequent work~\cite{SafranekDeutschAguirrePRA2019b, 
SafranekAguirreDeutschPRE2020, FaiezEtAlPRA2020, SchindlerSafranekAguirrePRA2020}, where also the case of 
multiple non-commuting coarse-grainings is treated. In our exposition, we only deal with single or multiple but 
commuting coarse-grainings. 

\section{Second law and effective temperature in isolated systems}
\label{sec entropy second law isolated}

We now consider the first thermodynamic application of observational entropy in isolated systems, thereby introducing 
concepts that turn out to be important for the open system paradigm studied in Sec.~\ref{sec first second law open}. 
For simplicity, we focus on a homogeneous isolated system with energy as the only relevant macrovariable. We beginn 
by studying entropy production in general followed by a discussion of the reversible case. We then introduce the 
important concept of an effective nonequilibrium temperature. Finally, we briefly discuss possible extensions. 

\subsection{Entropy production in a homogeneous system}
\label{sec sub EP homogenous}

We consider a driven isolated system with Hamiltonian $H(\lambda_t)$ and imagine the total energy of the isolated 
system to be the only relevant (or accessible) thermodynamic quantity. We call such a system \emph{homogenous} as 
we ignore any spatial irregularities. Thus, our coarse-graining is defined by $\{\Pi_{E_t}\}$, where $\Pi_{E_t}$ is 
obtained from the previously introduced projector~(\ref{eq projector microcanonical}) by replacing the eigenenergies 
$E_i$ and eigenstates $|E_i,\ell_i\rangle$ by $E_i(\lambda_t)$ and $|E_i(\lambda_t),\ell_i(\lambda_t)\rangle$ to take 
into account the external driving. 

The time evolution of the system is described by Eq.~(\ref{eq time evo isolated}). Denoting 
$p_{E_t}(t) = \mbox{tr}\{\Pi_{E(\lambda_t)}\rho(t)\}$ and $V_{E_t} = \mbox{tr}\{\Pi_{E(\lambda_t)}\}$, the 
observational entropy reads 
\begin{equation}\label{eq obs ent isolated system}
 S_\text{obs}^{E_t}[\rho(t)] \equiv \sum_{E_t} p_{E_t}(t)[-\ln p_{E_t}(t) + \ln V_{E_t}],
\end{equation}
which we use as our microscopic definition of thermodynamic entropy in this section. 

Next, we consider the set of states $\rho(t)$ that satisfy $S_\text{obs}^{E_t}[\rho(t)] = S_\text{vN}[\rho(t)]$. From 
Lemma~\ref{lemma S obs equal S vN} we know that this set is
\begin{equation}\label{eq equilibrium states}
 \Omega(\lambda_t) = \left\{\left.\sum_{E_t} p_{E_t} \omega(E_t)\right| p_{E_t} \text{ arbitrary}\right\}.
\end{equation}
These states correspond to a somewhat larger set of equilibrium states than conventionally considered in statistical 
mechanics, but they share the same feature: they are invariant in time for a \emph{fixed} Hamiltonian $H(\lambda_t)$ 
and, given a distribution $p_{E_t}$, they maximize the von Neumann entropy as a measure about our `ignorance' of the 
state. 

Moreover, whenever we start with a state $\rho(0)\in\Omega(\lambda_0)$, the second law follows 
from Lemma~\ref{lemma 2nd law}, 
\begin{equation}\label{eq EP isolated homo}
 \Sigma(t) = S_\text{obs}^{E_t}[\rho(t)] - S_\text{obs}^{E_0}[\rho(0)] \ge 0,
\end{equation}
independent of the unitary time evolution operator. Equation~(\ref{eq EP isolated homo}) is the entropy production 
of an isolated homogenous system for an energy coarse-graining. 

\subsection{Reversible case}
\label{sec sub reversible}

It is instructive to consider the reversible case of Eq.~(\ref{eq EP isolated homo}), defined by: 
$S_\text{obs}^{E_t}[\rho(t)] = S_\text{obs}^{E_0}[\rho(0)]$. While we are mostly interested in nonequilibrium 
situations in this article, the reversible case is an important limiting case in thermodynamics and typically 
(approximately) generated by changing the protocol $\lambda_t$ very slowly. 

The goal of this section is to prove that reversible processes are characterized by the fact that they are easy to 
time-reverse from a macroscopic point of view, in unison with our knowledge from thermodynamics. For that purpose, we 
need the notion of time-reversal symmetry, which is introduced in greater detail in Appendix~\ref{sec time reversal}. 

We recall how to time-reverse a unitary process \emph{in principle}. Let $\rho(t) = U(t,0)\rho(0)U^\dagger(t,0)$ be 
the time evolved state in the `forward process.' We denote by $\Theta$ the anti-unitary time-reversal operator. 
Consequently, $U_\Theta(t,0) = \Theta U^\dagger(t,0)\Theta^{-1}$ becomes the unitary time evolution operator 
generated by the Hamiltonian $H_\Theta(\lambda_s,B) = H(\lambda_{t-s},-B)$ with a time-reversed driving protocol and 
a reversed magnetic field $B$. Finally, let $\Theta\rho(t)\Theta^{-1}$ denote the time-reversed final state of the 
forward process. Then, time-reversal symmetry guarantees that we can recover the initial state $\rho(0)$ by 
\begin{equation}\label{eq time reversed process}
 \rho(0) = \Theta^{-1} U_\Theta(t,0)\Theta\rho(t)\Theta^{-1} U_\Theta^\dagger(t,0)\Theta.
\end{equation}
In words, we recover the initial state of the forward process if we time-reverse the final state, let the 
protocol run backward (and perhaps reverse a magnetic field), and apply the inverse time-reversal on the state. 
Here, the experimentally easy part is to reverse the driving protocol and a magnetic field. The hard part 
instead corresponds to time-reversing the state $\rho(t)$ (for instance, classically this requires to flip all momenta 
$p\rightarrow-p$, which already for a single particle is hard to achieve accurately). Moreover, since $\Theta$ is 
anti-unitary, it can not be implemented in a lab in general. An implementation of Eq.~(\ref{eq time reversed process}) 
therefore remains experimentally out of reach in most cases. 

There is, however, one important class of exceptions: the operation $\Theta\rho(t)\Theta^{-1}$ is easy to achieve if 
the states $\rho(t)$ are symmetric under time-reversal. These states are precisely the set of states 
characterized by Eq.~(\ref{eq equilibrium states}). Symbolically, we can denote this by 
$\Theta\Omega(\lambda_t)\Theta^{-1} = \Omega(\lambda_t)$ or 
$\Theta\Omega(\lambda_t,B)\Theta^{-1} = \Omega(\lambda_t,-B)$ in 
presence of a magnetic field. In words, an equilibrium state is invariant under time-reversal and hence, there is 
actually no need to implement the cumbersome time-reversal operation (apart from perhaps flipping $B$). 

Now, we return to the reversible case of Eq.~(\ref{eq EP isolated homo}). From 
$S_\text{obs}^{E_t}[\rho(t)] = S_\text{obs}^{E_0}[\rho(0)]$ and 
$S_\text{obs}^{E_0}[\rho(0)] = S_\text{vN}[\rho(0)] = S_\text{vN}[\rho(t)]$ we can conlude 
(cf.~Lemmas~\ref{lemma S obs equal S vN}) that also the final state must be an equilibrium 
state: $\rho(t) \in \Omega(\lambda_t)$. Thus, our approach based on observational entropy shows that \emph{reversible 
processes} are characterized by the fact that they are \emph{simple to time-reverse from a macrocopic point of view}. 

This statement is not a trivial tautology. If we had started with a different entropy concept, it is unclear 
whether this would imply the same statement. For instance, if we had identified the von Neumann entropy with 
thermodynamic entropy, then we would need to call all processes `reversible' despite being very complicated to 
time-reverse in practice. 

\subsection{Effective nonequilibrium temperature}
\label{sec sub noneq temp}

Up to now, we have introduced microscopic notions for internal energy, mechanical work and thermodynamic entropy. 
Another important quantity is \emph{temperature}. This concept also plays an important role in the next section, 
where our goal is to provide a microscopic derivation of the phenomenological Clausius 
inequality~(\ref{eq EP traditional}), which remains valid even \emph{out of equilibrium}. 

We remark that the definition of meaningful nonequilibrium temperatures has a long 
history~\cite{CasasVazquezJouRPP2003}. The definition we adapt here has appeared in phenomenological nonequilibrium 
thermodynamics under the name ``nonequilibrium contact temperature'' more than 40 years ago~\cite{Muschik1977, 
MuschikBrunkIJES1977}. It has also appeared at various places in the statistical mechanics literature (see, e.g., 
Refs.~\cite{TasakiArXiv2000, RitortJPCB2005, JohalPRE2009, SeifertPA2020}) without, however, enjoying a wider 
popularity. 

For an arbitrary nonequilibrium state $\rho(t)$ we define the inverse nonequilibrium temperature $\beta^*_t$ by 
demanding  
\begin{equation}\label{eq def noneq temp}
 U(t) = \mbox{tr}\{H(\lambda_t)\rho(t)\} \equiv \mbox{tr}\{H(\lambda_t)\pi(\beta_t^*)\},
\end{equation}
i.e., we ask which inverse temperature does a fictitious Gibbs state $\pi(\beta_t^*)$ need to have such that its 
internal energy matches the true internal energy. In terms of our coarse-grained energy 
measurement~(\ref{eq projector microcanonical}), a Gibbs state is approximatively described by probabilities 
$\pi_{E_t}(\beta) \approx V_{E_t} e^{-\beta E_t}/\C Z(\beta,\lambda_t)$ with 
$\C Z(\beta,\lambda_t) = \sum_{E_t} V_{E_t} e^{-\beta E_t}$. Definition~(\ref{eq def noneq temp}) can then be 
also expressed as 
\begin{equation}\label{eq def noneq temp 2}
 \sum_{E_t} E_t p_{E_t}(t) \equiv \sum_{E_t} E_t \pi_{E_t}(\beta^*_t).
\end{equation}
Of course, definitions~(\ref{eq def noneq temp}) and~(\ref{eq def noneq temp 2}) match only if the measurement 
uncertainty (or thickness of the energy shell) $\delta$ is chosen sufficiently small such that 
$U(t) \approx \sum_{E_t} E_t p_{E_t}(t)$. For ease of presentation, we assume this in the following. 

An alternative way to describe the meaning of the nonequilibrium temperature $T^*_t = 1/\beta^*_t$ is as 
follows~\cite{Muschik1977, MuschikBrunkIJES1977}. Suppose that we have a collection of \emph{superbaths} at our 
disposal, which are prepared at different equilibrium temperatures $T$. Then, $T^*_t$ is defined to be 
the temperature $T$ of a superbath, which causes \emph{no net heat exchange} when coupling the system to it. 

Assuming the Hamiltonian to be time-independent for a moment, another property of $\beta_t^*$ follows by recalling 
that the canonical ensemble $\pi(\beta)$ with internal energy $\C U(\beta) = \mbox{tr}\{H\pi(\beta)\}$ satisfies 
\begin{equation}
 d\C U(\beta) = \C C(T) dT = -\frac{\C C(\beta)}{\beta^2} d\beta,
\end{equation}
where $d\C U(\beta) = \C U(\beta+d\beta) - \C U(\beta)$ and 
$\C C(\beta) = \beta^2 [\mbox{tr}\{H^2\pi(\beta)\} - \mbox{tr}\{H\pi(\beta)\}^2]$ denotes the heat capacity, which 
is \emph{non-negative}. Thus, by definition of the effective inverse temperature we can conclude that 
$\beta^* = \beta^*(U)$ is monotonically decreasing as a function of the internal energy $U$, stretching from 
$\beta^* = \infty$ if the system is in its ground state to $\beta^* = -\infty$ if the system is in its highest excited 
state (assuming the Hamiltonian is bounded from above, otherwise $\beta^*$ remains positive). 

Finally, $\beta^*$ allows us to establish a remarkable connection between energy and entropy, even out of equilibrium. 
Let ${\cal S}(\beta,\lambda) = S_\text{vN}[\pi(\beta,\lambda)]$ denote the von Neumann entropy of a Gibbs state at inverse temperature $\beta$. It follows that 
$T_t^*d\C S(\beta_t,\lambda_t) = dU(t) - \mbox{tr}\{[dH(\lambda_t)]\pi_t(\beta_t^*)\}$. Here, $dU(t)$ is the change 
of the nonequilibrium internal energy in Eq.~(\ref{eq def noneq temp}), and the term 
$\mbox{tr}\{[dH(\lambda_t)]\pi_t(\beta_t^*)\}$ can be interpreted as the work done on the system 
during a (fictitious) equilibrium process. Let us label $T_t^*d\C S(\beta_t,\lambda_t) \equiv \dbar Q(t)$ as a heat 
flux for reasons that will become clear in the next section. Then, 
\begin{equation}\label{eq entropy Clausius term}
 \C S(\beta_t^*) - \C S(\beta_0^*) = \int \frac{\dbar Q(s)}{T_s^*}
\end{equation}

Thus, the entropy production~(\ref{eq EP isolated homo}) can be written as: 
\begin{equation}
 \begin{split}\label{eq EP isolated Clausius}
  \Sigma(t) =&~ S_\text{obs}^{E_t}[\rho(t)] - \C S(\beta_t^*,\lambda_t) + \int \frac{\dbar Q(s)}{T_s^*} \\
            &+ \C S(\beta_0^*,\lambda_0) - S_\text{obs}^{E_0}[\rho(0)]
 \end{split}
\end{equation}
In particular, if the isolated system is prepared in a Gibbs state, the last line vanishes. Furthermore, since the 
Gibbs state maximizes entropy with respect to a fixed energy, we can conclude 
$S_\text{obs}^{E_t}[\rho(t)] \le S(\beta_t^*,\lambda_t)$. Consequently, 
\begin{equation}
 \int \frac{\dbar Q(s)}{T_s^*} \ge \Sigma(t) \ge 0,
\end{equation}
which we will use in the next section. 

\subsection{Extensions}
\label{sec sub extensions}

The simple description of an isolated system in terms of a homogenous coarse-grained energy variable 
covers only a small fraction of thermodynamically interesting situations. For instance, an accurate description 
of ultracold atoms experiments~\cite{BlochDalibardZwergerRMP2008, LewensteinSanperaAhufingerBook2012} likely 
requires further variables (particle number, magnetization, polarization, etc.) and, perhaps, variables with spatial 
resolution (e.g., energy or particle \emph{densities}). Since it seems impossible to cover all these experiments in a 
tutorial article, we focused only on the basics above. They can be generalized by refining the coarse-graining 
and illustrative examples have been already investigated by using observational 
entropy~\cite{SafranekDeutschAguirrePRA2019a, SafranekDeutschAguirrePRA2019b, FaiezEtAlPRA2020}. 

\section{First and second law in open systems}
\label{sec first second law open}

In this section, we derive the hierarchy of second 
laws~(\ref{eq EP basic basic}),~(\ref{eq EP basic}),~(\ref{eq EP traditional}) and~(\ref{eq EP traditional OQS}) 
for a suitable coarse-graining reflecting the degree of control an external agent has about the open quantum system. 
To derive Clausius' inequality~(\ref{eq EP traditional}) we use the microscopic definition~(\ref{eq def noneq temp}) 
of temperature and introduce definitions for heat and internal energy of an \emph{open} system. The present 
treatment presents a significant extensions of earlier work using observational entropy~\cite{StrasbergArXiv2019b}. 
Further generalizations of this approach (initially correlated states, multiple baths, particle transport) are treated 
in Sec.~\ref{sec further extensions}. 

\subsection{Relevant coarse-graining and initial state}
\label{sec sub relevant coarse graining}

The central idea of system-bath theories is to divide the universe into relevant degrees of freedom (the `system') and 
irrelevant degrees of freedom (the `bath')~\cite{BreuerPetruccioneBook2002, DeVegaAlonsoRMP2017}. The relevant 
degrees of freedom are assumed to be accessible by experiment, whereas only limited information is available 
about the irrelevant degrees of freedom. Many current experimental platforms---such as cavity or ciruit QED setups, 
optomechanical or nanoelectromechanical systems, quantum dots or nitrogen vacancy (NV) centers---show such a separation 
between precisely measurable system quantities and coarse information about the bath. 

Our definition of observational entropy is supposed to reflect this situation and, therefore, we choose the 
coarse-graining $\{|s\rl s|\otimes \Pi_{E_B}\}$. Here, $\{|s\rl s|\}$ is a set of rank-1 projectors acting on the 
system Hilbert space, whereas $\{\Pi_{E_B}\}$ is a set of coarse-grained energy projectors for the bath, i.e., 
$\Pi_{E_B}$ is constructed as in Eq.~(\ref{eq projector microcanonical}), but with respect to the bath Hamiltonian 
$H_B$. Thus, a measurement yielding outcome $(s,E_B)$ with probability 
$p_{s,E_B} = \mbox{tr}_{SB}\{|s\rl s|\otimes \Pi_{E_B}\rho_{SB}\}$ gives us complete knowledge about the microstate 
of the system, but reveals only partial information about the energy of the bath (which is related to its 
temperature). We remark that the basis for the system coarse-graining is arbitrary and might change in time. 
Therefore, we write $|s\rangle = |s_t\rangle$ in the following. Then, the observational entropy follows as 
\begin{equation}\label{eq obs ent OQS 1 bath}
 S_\text{obs}^{S_t,E_B}[\rho_{SB}(t)] \equiv -\sum_{s_t,E_B} p_{s_t,E_B}(t)\ln\frac{p_{s_t,E_B}(t)}{V_{E_B}},
\end{equation}
where $V_{E_B} = \mbox{tr}_B\{\Pi_{E_B}\}$ counts the number of bath microstates compatible with outcome $E_B$. 
Equation~(\ref{eq obs ent OQS 1 bath}) is our microscopic definition for thermodynamic entropy in the following. 

We believe that the coarse-graining above most accurately reflects the current spirit of open quantum system 
theory and many nanotechnological platforms. We remark, however, that all the following identities---unless 
otherwise stated---are valid for a system and a bath of any size. Moreover, by choosing a coarse-graining for the 
bath as in Sec.~\ref{sec entropy second law isolated}, we implicitly assumed the bath to be a homogeneous object from 
a macroscopic point of view. Perhaps not too far in the future, it might be necessary to extend the present approach 
to take into account further information about the bath in form of, e.g., spatial irregularities. Furthermore, if the 
system itself becomes large (say, larger than 10 qubits), the description in terms of fine-grained rank-1 projectors 
$|s_t\rl s_t|$ might no longer be adequate. Whatever information is necessary to accurately describe the experiment, 
the present approach can be adapted accordingly. 

Finally, we fix the initial state. In unison with the conventional open quantum systems 
approach~\cite{BreuerPetruccioneBook2002, DeVegaAlonsoRMP2017}, we consider in this section an initial state of the 
form 
\begin{equation}\label{eq initial state OQS}
 \rho_{SB}(0) = \rho_S(0) \otimes \pi_B(\beta_0).
\end{equation}
This describes a system state $\rho_S(0)$ initially decorrelated from a bath described by a Gibbs state at inverse 
temperature $\beta_0$. Since we are allowed to choose any $\{|s_0\rl s_0|\}$, we assume $\lr{s_0|\rho_S(0)|s'_0} = 0$ 
for all $s_0\neq s'_0$ in the following. Indeed, if the experimenter initially performs a measurement with projectors 
$\{|s_0\rl s_0|\}$, then this assumption holds automatically. Furthermore, as in Sec.~\ref{sec sub noneq temp}, we 
assume the resolution $\delta$ of the energy measurement of the bath to be sufficiently small such that 
\begin{equation}\label{eq delta agreement OQS}
 \C S_B(\beta_0) \equiv S_\text{vN}[\pi_B(\beta_0)] \approx S_\text{obs}^{E_B}[\pi_B(\beta_0)].
\end{equation}
This implies that we are consistent at equilibrium, with observational entropy coinciding with the standard 
equilibrium entropy of a canonical ensemble. 

Extensions of the initial state~(\ref{eq initial state OQS}) to take into account system-bath correlations or a bath 
not prepared in a canonical ensemble are treated in Sec.~\ref{sec further extensions}. 

\subsection{General second law}

Using the properties of the initial state discussed above, we confirm that 
$S_\text{vN}[\rho_{SB}(0)] = S_\text{obs}^{S_0,E_B}[\rho_{SB}(0)]$. Thus, from Lemma~\ref{lemma 2nd law} we directly 
find 
\begin{equation}\label{eq EP 2 micro}
 \Sigma_a(t) = S_\text{obs}^{S_t,E_B}[\rho_{SB}(t)] - S_\text{obs}^{S_0,E_B}[\rho_{SB}(0)] \ge 0.
\end{equation}
This quantifies the entropy production in our setup in its most general form. The subscript `a' on $\Sigma_a(t)$ 
shall remind us that this entropy production corresponds to the phenomenological second law of 
Eq.~(\ref{eq EP basic basic}). 

Furthermore, the decorrelated initial state~(\ref{eq initial state OQS}) implies 
$S_\text{obs}^{S_0,E_B}[\rho_{SB}(0)] = S_\text{obs}^{S_0}[\rho_S(0)] + S_\text{obs}^{E_B}[\rho_B(0)]$. At any later 
time we can write 
\begin{equation}
 \begin{split}
  S_\text{obs}^{S_t,E_B}[\rho_{SB}(t)] =&~ S_\text{obs}^{S_t}[\rho_S(t)] + S_\text{obs}^{E_B}[\rho_B(t)] \\
  &- I_\text{obs}^{S_t,E_B}[\rho_{SB}(t)].
 \end{split}
\end{equation}
Here, the classical mutual information (see Appendix~\ref{sec basics information theory}) 
\begin{equation}
 I_\text{obs}^{S_t,E_B}[\rho_{SB}(t)] = \sum_{s_t,E_B} p_{s_t,E_B}(t) \ln\frac{p_{s_t,E_B}(t)}{p_{s_t}(t)p_{E_B}(t)}
\end{equation}
characterizes the correlations in the final measurement result $(s_t,E_B)$. Since it is non-negative, we obtain 
\begin{equation}\label{eq EP 3 micro}
 \Sigma_b(t) = \Delta S_\text{obs}^{S_t}[\rho_S(t)] + \Delta S_\text{obs}^{E_B}[\rho_B(t)] \ge 0,
\end{equation}
which is the microscopic analogue of the phenomenological second law~(\ref{eq EP basic}). It follows that 
\begin{equation}
 \Sigma_b(t) - \Sigma_a(t) = I_\text{obs}^{S_t,E_B}[\rho_{SB}(t)] \ge 0. 
\end{equation}

The relevance of the mutual information term for the (thermo)dynamics of open quantum systems still needs 
further elucidation. In general, it obeys the inequalities 
\begin{equation}\label{eq MI inequalities}
 0 \le I_\text{obs}^{S_t,E_B}[\rho_{SB}(t)] \le I_{S:B}[\rho_{SB}(t)] \le 2\ln\dim\C H_S,
\end{equation}
where we assumed the Hilbert space dimension $\dim\C H_S$ of the system to be smaller than the Hilbert space 
dimension of the bath. Furthermore, numerical results~\cite{PtaszynskiEspositoPRL2019, 
RieraCampenySanperaStrasbergPRXQ2021} suggest that the mutual information can be large in view of these bounds: 
if $s_t$ denotes measurements of the system energy and if the system is undriven ($\lambda_t =$ constant), then---as a 
result of the microscopic conservation of energy---strong system-bath correlations can build up. 
But if the system is driven, correlations seem to diminish~\cite{RieraCampenySanperaStrasbergPRXQ2021} and the 
entropy production will be dominated by changes in the bath entropy $\Delta S_\text{obs}^{E_B}[\rho_B(t)]$, which can 
grow proportional with time $t$ in contrast to the mutual information~\cite{PtaszynskiEspositoPRL2019}. 

\subsection{Heat, internal energy and Clausius inequality}
\label{sec sub Clausius}

We now derive Clausius' inequality~(\ref{eq EP traditional}). It quantifies the entropy production of a system 
undergoing a nonequilibrium process while being in contact with a bath, whose temperature changes due to the flow of 
heat. 

Recall Sec.~\ref{sec sub noneq temp} where we defined a nonequilibrium temperature for any isolated system. 
This definitions also applies equally well to any subsystem. Thus, let $T_t^*$ denote the time-dependent 
nonequilibrium temperature of the bath, obtained from Eq.~(\ref{eq def noneq temp}) by adding a 
subscript $B$ to all quantities. Then, from Eqs.~(\ref{eq EP isolated Clausius}) and~(\ref{eq delta agreement OQS}) 
we infer that the change in bath entropy is 
\begin{align}\label{eq Clausius bath}
 \Delta S_\text{obs}^{E_B}[\rho_B(t)] 
 &= S_\text{obs}^{E_B}[\rho_B(t)] - \C S_B(\beta_t^*) + \int \frac{dU_B(s)}{T_s^*} \nonumber \\
 &\le \int \frac{dU_B(s)}{T_s^*}.
\end{align}
Here, we used that $d{\cal S}_B(\beta_s^*) = \beta_s^* dU_B(s)$ and $dU_B(s) = \dbar Q(s)$ because the bath 
Hamiltonian is time-independent. Furthermore, we used $S_\text{obs}^{E_B}[\rho_B(t)] \le {\cal S}_B(\beta_t^*)$ as also 
done below Eq.~(\ref{eq EP isolated Clausius}).

The idea to identify the change in internal energy of the bath with the heat flux appears very convincing at 
this point. Following our convention to count the energy flux into the system positive, we set 
$\dbar Q(s) \equiv -dU_B(s)$. It follows from Eqs.~(\ref{eq EP 3 micro}) and~(\ref{eq Clausius bath}) that 
\begin{equation}\label{eq EP 4 micro}
 \Sigma_c(t) = \Delta S_\text{obs}^{S_t}[\rho_S(t)] - \int \frac{\dbar Q(s)}{T_s^*} \ge 0.
\end{equation}
This constitutes a microscopic derivation of Clauisus' inequality~(\ref{eq EP traditional}). It extends an earlier 
analysis~\cite{JarzynskiJSP1999} by not assuming the bath to be at equilibrium at each time step. 

It is instructive to discuss the consequences of the identification $\dbar Q(s) \equiv -dU_B(s)$ further. First, 
one finds 
\begin{equation}
 \Sigma_c(t) - \Sigma_b(t) = \C S_B(\beta_t^*) - S_\text{obs}^{E_B}[\rho_B(t)] \ge 0.
\end{equation}
As expected, this difference is zero if the bath is also at later times well described by an equilibrium state with 
temperature $T_t^*$. In general, however, $\Sigma_c$ overestimates $\Sigma_b$ by neglecting potential 
nonequilibrium resources stored in the distribution of bath energies $E_B$ at time $t$. Such nonequilibrium resources 
were indeed recently studied in Refs.~\cite{SanchezSplettstoesserWhitneyPRL2019, HajilooEtAlPRB2020}. 

Second, the present identification of heat forces us, by virtue of the first law~(\ref{eq 1st law}), to identify the 
internal energy of the \emph{open} system as 
\begin{equation}\label{eq def internal energy open}
 U_S(t) \equiv \mbox{tr}_{SB}\{[H_S(\lambda_t) + V_{SB}]\rho_{SB}(t)\}. 
\end{equation}
In fact, based on this definition it is easy to microscopically verify the first law $\Delta U_S(t) = Q(t) + W(t)$. 
Here, $Q(t) = \int_0^t \dbar Q(s)$ is the total heat flow into the system and the mechanical work $W(t)$ was defined 
in Eq.~(\ref{eq def work and power}). We concluded in Sec.~\ref{sec energy work} that this definition of work is 
unambiguous for a driven system. Using the form of the system-bath Hamiltonian~(\ref{eq system bath Hamiltonian}), 
Eq.~(\ref{eq def work and power}) simplifes to 
\begin{equation}\label{eq def work OQS}
 W(t) = \int_0^t ds \mbox{tr}_S\left\{\frac{\partial H_S(\lambda_s)}{\partial s}\rho_S(s)\right\}.
\end{equation}

The definition~(\ref{eq def internal energy open}) seems to naturally follow in the present framework and it has also 
appeared in different earlier approaches~\cite{BassettPRA1978, LindbladBook1983, PeresBook2002, AndrieuxEtAlNJP2009, 
EspositoLindenbergVandenBroeckNJP2010, SagawaUedaPRL2010b, TakaraHasegawaDriebePLA2010}. It is, however, important 
to point out that Eq.~(\ref{eq def internal energy open}) can not be computed by only knowing the reduced system 
state $\rho_S(t)$ as it includes the interaction Hamiltonian $V_{SB}$, which is a disadvantage of the present 
definition. Only in the weak coupling regime, where the effect of $V_{SB}$ is assumed negligible compared to $H_S$ and 
$H_B$, we have $U_S(t) \approx \mbox{tr}_S\{H_S(\lambda_t)\rho_S(t)\}$. 

In fact, beyond weak coupling the correct definition of heat and internal energy is fiercly debated. Different  
proposals exist for quantum systems~\cite{LudovicoEtAlPRB2014, EspositoOchoaGalperinPRB2015, StrasbergEtAlNJP2016, 
BruchEtAlPRB2016, KatoTanimuraJCP2016, NewmanMintertNazirPRE2017, BeraEtAlNC2017, StrasbergEtAlPRB2018, 
LudovicoEtAlPRB2018, DouEtAlPRB2018, StrasbergEspositoPRE2019, RivasPRL2020} and also the classical case remains 
debated~\cite{SeifertPRL2016, TalknerHaenggiPRE2016, JarzynskiPRX2017, StrasbergEspositoPRE2017, MillerAndersPRE2017, 
StrasbergEspositoPRE2020, TalknerHaenggiRMP2020, TalknerHaenggiPRE2020, StrasbergEspositoPRE2020b}. 
The goal of this tutorial is \emph{not} to advertise Eq.~(\ref{eq def internal energy open}) as the only meaningful 
candidate. We believe, however, that the present framework helps to advance the debate for two reasons. 

First, we established a link between heat and entropy changes in the bath in Eq.~(\ref{eq Clausius bath}). 
It indicates that heat remains a meaningful concept even if the bath is not at equilibrium, but it no longer is the 
only contribution to the change in bath entropy. This important link has not been established in previous 
approaches. 

Second, our identification of heat results from our choice to view the coarse-grained energy of the bath as the 
relevant variable. We emphasized already that different, more refined choices are possible. This might in 
particular be relevant at strong coupling. Checking which of the many different proposals above can be explained by 
using observational entropy with respect to a \emph{different} coarse-graining would add further appeal and 
additional insights to them. 

\subsection{Weakly perturbed bath}

We complete the derivation of the laws of thermodynamics by deriving Eq.~(\ref{eq EP traditional OQS}). From the 
phenomenological description we expect Eq.~(\ref{eq EP traditional OQS}) to emerge out of the second 
law~(\ref{eq EP traditional}) whenever the bath can be approximated as \emph{static} such that its thermodynamic 
parameters do not change. This is often justified if the bath is very large and the system very small.

To reflect this idea in our framework, we start by expanding the probabilities $p_{E_B}(t)$ to measure the bath energy 
$E_B$ at time $t$ as 
\begin{equation}\label{eq bath state correction}
 p_{E_B}(t) = \pi_{E_B}(\beta_0)[1+\epsilon q_{E_B}(t)].
\end{equation}
Here, $\pi_{E_B}(\beta_0) = V_{E_B} e^{-\beta_0 E_B}/\C Z_B(\beta_0)$ denotes the initial probability to measure 
$E_B$ and $q_{E_B}(t) \in[-1,1]$ is a correction term, which, due to normalization, satisfies 
$\sum_{E_B} \pi_{E_B}(\beta_0)q_{E_B}(t) = 0$. A \emph{weakly perturbed bath} is now described by the situation where 
the parameter $\epsilon$ is small enough such that terms of order $\C O(\epsilon^2)$ are negligible. 

We now apply this idea to compute the change in bath entropy. By using Eq.~(\ref{eq bath state correction}), we get
\begin{equation}\label{eq relation heat}
 \Delta S_\text{obs}^{E_B}(t) = \beta_0\Delta U_B(t) + \C O(\epsilon^2),
\end{equation}
where the change in coarse-grained bath energy is 
\begin{equation}
 \begin{split}
  \Delta U_B(t) &= \sum_{E_B} E_B[p_{E_B}(t) - \pi_{E_B}(\beta_0)] \\ 
                &= \epsilon \sum_{E_B} E_B \pi_{E_B}(\beta_0)q_{E_B}(t).
 \end{split}
\end{equation}
Likewise, Eq.~(\ref{eq bath state correction}) also implies that the final nonequilibrium temperature must be 
$\epsilon$-close to the initial temperature: $|T_t^* - T_0| = \C O(\epsilon)$. We then obtain 
\begin{equation}
 \int_0^t \frac{d U_B(s)}{T_s^*}ds = \frac{\Delta U_B(t)}{T_0} + \C O(\epsilon^2)
\end{equation}
since $\Delta U_B(t)$ is itself of order $\epsilon$. 

Thus, for a weakly perturbed bath we can conclude 
\begin{equation}
 \Sigma_c(t) \approx \Sigma_d(t) = \Delta S_\text{obs}^{S_t}[\rho_S(t)] - \frac{Q(t)}{T_0} \ge 0. 
\end{equation}
This finishes our derivation of the hierarchy of second laws. Since the above inequality holds for all system 
coarse-grainings $\{|s_t\rl s_t|\}$, we can also choose it to coincide with the eigenbasis of $\rho_S(t)$. Then, 
we get $\Delta S_\text{obs}^{S_t}[\rho_S(t)] = \Delta S_\text{vN}[\rho_S(t)]$ and the second law becomes 
\begin{equation}\label{eq EP 5 micro}
 \Sigma_c(t) \approx \Sigma_d(t) = \Delta S_\text{vN}[\rho_S(t)] - \frac{Q(t)}{T_0} \ge 0. 
\end{equation}
This expression of the second law if often found in the context of open quantum system 
theory~\cite{BreuerPetruccioneBook2002, KosloffEntropy2013}. 
We conclude this section by putting our results in context of two other findings. 

First, Eq.~(\ref{eq EP 5 micro}) is often written for an infinitesimal time step as 
\begin{equation}\label{eq EP rate}
 \dot\Sigma_d(t) = \frac{d}{dt} S_\text{vN}[\rho_S(t)] - \frac{\dot Q(t)}{T_0},
\end{equation}
where $\dot\Sigma_d(t)$ is the entropy production \emph{rate} and $\dot Q(t) = -dU_B(t)/dt$. Whereas the 
non-negativity of Eq.~(\ref{eq EP 5 micro}) is guaranteed, the non-negativity of the entropy production rate 
$\dot\Sigma_d(t)$ is \emph{not}. However, one has $\dot\Sigma_d(t) \ge 0$ if the dynamics of the open system state 
$\rho_S(t)$ is described by the so-called Born-Markov-secular master equation, which has become---despite its many 
approximations involved---a widely used tool in the field~\cite{BreuerPetruccioneBook2002, KosloffEntropy2013, 
SchallerBook2014}. Similar approximations can be also used to derive a master equation for the probabilities 
$p_{s_t,E_B}(t)$. Then, in analogy to the previous case, one can confirm that 
$\dot\Sigma_a(t) = dS_\text{obs}^{S_t,E_B}[\rho_{SB}(t)]/dt \ge 0$~\cite{RieraCampenySanperaStrasbergPRXQ2021}. 
We remark that Markovianity alone is not sufficent to guarantee 
the non-negativity of the entropy production rate in general~\cite{StrasbergEspositoPRE2019}. 

Second, Eq.~(\ref{eq EP 5 micro}) emerged out of the more general version~(\ref{eq EP 4 micro}) of the second law for 
a weakly perturbed bath in unison with the phenomenological theory. Somewhat remarkably, it is 
possible to show that Eq.~(\ref{eq EP 5 micro}) always holds for the initial condition~(\ref{eq initial state OQS}), 
regardless of how far the bath is pushed away from equilibrium~\cite{LindbladBook1983, PeresBook2002, 
EspositoLindenbergVandenBroeckNJP2010, SagawaUedaPRL2010b, TakaraHasegawaDriebePLA2010}. 
To distinguish this case from the regime of validity of Eq.~(\ref{eq EP 5 micro}), we denote this inequality by 
\begin{equation}\label{eq EP 5 alternative}
 \tilde\Sigma_d(t) \equiv \Delta S_\text{vN}[\rho_S(t)] - \frac{Q(t)}{T_0} \ge 0.
\end{equation}
Importantly, for a bath far from equilibrium it has not been possible to link $Q(t)/T_0$ to an entropy change. 
Strictly speaking, Eq.~(\ref{eq EP 5 alternative}) therefore coincides with the second law only if the bath is weakly 
perturbed, whereas Eq.~(\ref{eq EP 4 micro}) is consistent with the second law for a larger class of 
transformations not restricted to the isothermal case. Furthermore, it was recently 
found~\cite{StrasbergDiazRieraCampenyPRE2021} that $\tilde\Sigma_d(t)$ is an upper bound on the entropy production since 
\begin{equation}
 \tilde\Sigma_d(t) - \Sigma_c(t) = D[\pi_B(\beta_t^*)\|\pi_B(\beta_0)] \ge 0,
\end{equation}
which has consequences for the efficiency of heat engines in contact with finite 
baths~\cite{StrasbergDiazRieraCampenyPRE2021}. 

\section{Further extensions}
\label{sec further extensions}

We here extend the previous framework to cover a larger class of initial states 
(Sec.~\ref{sec sub generalized states}), multiple baths (Sec.~\ref{sec sub multiple baths}) and particle transport 
(Sec.~\ref{sec sub particles}). 

\subsection{Generalized initial states}
\label{sec sub generalized states}

As promised above, the second law can be shown to strictly hold for a much larger class of initial states than 
those described by Eq.~(\ref{eq initial state OQS}). In fact, by Lemma~\ref{lemma 2nd law} we know that 
Eq.~(\ref{eq EP 2 micro}) holds for all initial states satisfying 
$S_\text{obs}^{S_0,E_B}[\rho_{SB}(0)] = S_\text{vN}[\rho(0)]$. By Lemma~\ref{lemma S obs equal S vN} and by choosing 
the coarse-graining from the previous section, these states are given by 
\begin{equation}\label{eq initial state general}
 \rho_{SB}(0) = \sum_{s_0,E_B} p_{s_0,E_B}(0)|s_0\rl s_0|\otimes\omega_B(E_B),
\end{equation}
with arbitrary probabilities $p_{s_0,E_B}(0)$. This generalizes the previous initial state~(\ref{eq initial state OQS}) 
in two ways. First, the bath need not be described by a Gibbs state---a microcanonical 
state or any convex combination thereof can also be considered. Second, the initial state does not need to be 
decorrelated. It can have arbitrary classical correlations with respect to the chosen coarse-graining. 

In view of what we said at the end of Sec.~\ref{sec sub Clausius}, it is also possible to imagine coarse-grainings 
different from the one chosen in Sec.~\ref{sec sub relevant coarse graining}. In particular, by going beyond a 
coarse-graining with a system-bath tensor product structure as considered here, quantum correlations could be 
included in the description. 

Finally, we explicitly decompose the entropy production~(\ref{eq EP 2 micro}) for an initial state of the 
form~(\ref{eq initial state general}) into all its contributions: 
\begin{align}
 \Sigma_a(t) =&~\Delta S_\text{obs}^{S_t}[\rho_S(t)] - \int \frac{\dbar Q(s)}{T_s^*} \label{eq EP OQS general} \\
 &+ S_\text{obs}^{E_B}[\rho_B(t)] - \C S_B(\beta_t^*) - S_\text{obs}^{E_B}[\rho_B(0)] + \C S_B(\beta_0^*) \nonumber \\
 &+ I_\text{obs}^{S_0,E_B}[\rho_{SB}(0)] - I_\text{obs}^{S_t,E_B}[\rho_{SB}(t)] \nonumber
\end{align}
The first line describes the Clausius contribution to the entropy production, obtained by neglecting system-bath 
correlations and by assuming the bath to be well described by its effective temperature only. The second line takes 
into account nonequilibrium features of the bath state in comparison with a fictitious Gibbs ensemble at the same 
energy. The third line quantifies the influence of system-bath correlations on the second law. 

To estimate the influence of each of these terms, we consider a small system, which is coupled to a large bath 
and subject to a, say, periodic driving protocol with period $\tau$. Furthermore, we consider times $t = n\tau$ 
with $n$ large. In this case, the system reaches a periodic steady state and constantly dissipates energy into the 
bath. We therefore expect that the entropy production scales with time such that $\Sigma_a(t) \sim t$. Our 
\emph{conjecture} is that the lines in Eq.~(\ref{eq EP OQS general}) have been ordered in decreasing relevance: 
\begin{align}
 &\Delta S_\text{obs}^{S_t}[\rho_S(t)] - \int \frac{\dbar Q(s)}{T_s^*} \\
 &\gg \left|S_\text{obs}^{E_B}[\rho_B(t)] - \C S_B(\beta_t^*) - S_\text{obs}^{E_B}[\rho_B(0)] + \C S_B(\beta_0^*)\right| \nonumber \\
 &\gg \left|I_\text{obs}^{S_0,E_B}[\rho_{SB}(0)] - I_\text{obs}^{S_t,E_B}[\rho_{SB}(t)]\right| \nonumber
\end{align}

We justify this conjecture as follows. First, if the system reaches a periodic steady state maintained by a constant 
uptake of mechanical work, the total heat flux $Q(t) \sim t$ has to scale proportional to $t$ by the first law. Thus, 
although $\Delta S_\text{obs}^{S_t}[\rho_S(t)]$ becomes negligible as it is bounded by $\ln(\dim\C H_S)$, the first 
line in Eq.~(\ref{eq EP OQS general}) is expected to scale as $t$. Furthermore, the last line can not scale with $t$ 
and must reach a constant, which is at most $2\ln(\dim\C H_S)$. Therefore, it is negligible for long times. The really 
challenging question concerns the second line. We can not exclude that this contribution scales with $t$, albeit we 
believe that its rate of growth should be in most cases \emph{sublinear} (e.g., $\sqrt{t}$). This believe is motivated 
by the fact that the microscopic dynamics of a typical heat bath are often very complex, characterized by (close to) 
chaotic behaviour, such that it becomes hard to distinguish its true state from an idealized Gibbs ensemble. This idea 
is indeed supported by research on equilibration and thermalization in isolated many-body 
systems~\cite{GemmerMichelMahlerBook2004, DAlessioEtAlAP2016, GogolinEisertRPP2016, GooldEtAlJPA2016, DeutschRPP2018, 
MoriEtAlJPB2018}. In any case, while the behaviour of the first and third line in Eq.~(\ref{eq EP OQS general}) 
appears universal, the behaviour of the second line will be model-dependent. 

\subsection{Multiple baths}
\label{sec sub multiple baths}

In many relevant situations, in particular to study transport process, the open system is coupled to multiple baths, 
labeled by $\nu\in\{1,\dots,n\}$, see Fig.~\ref{fig setup multiple baths} for a sketch. The system-bath 
Hamiltonian~(\ref{eq system bath Hamiltonian}) is then generalized to 
\begin{equation}
 H_{SB}(\lambda_t) = H_S(\lambda_t) + \sum_\nu \left[V_{SB}^{(\nu)} + H_B^{(\nu)}\right].
\end{equation}
We denote the global system-bath state at time $t$ by $\rho_{SB}(t)$ and the marginal state of 
bath $\nu$ by $\rho_\nu(t)$. In the following, we show that our framework can be extended to this situation in a 
straightforward way. 

\begin{figure}[b]
 \centering\includegraphics[width=0.44\textwidth,clip=true]{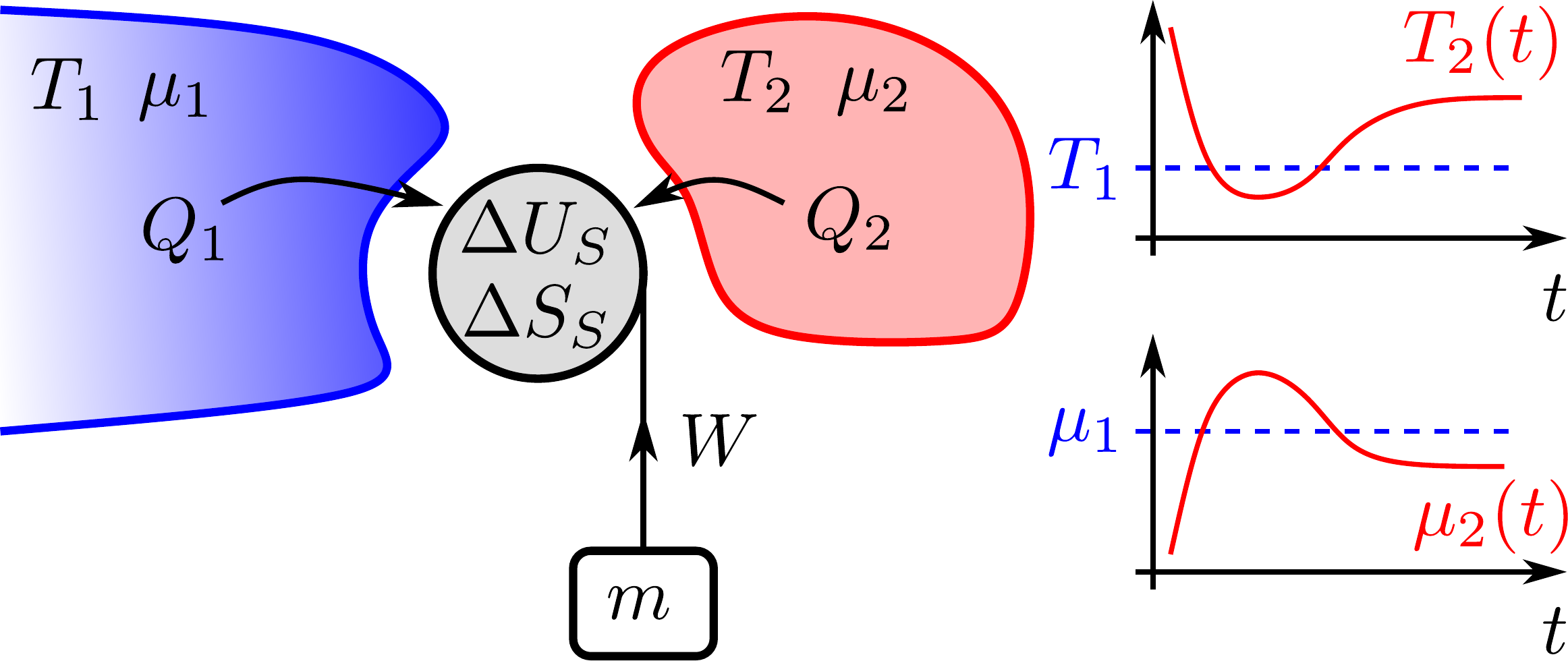}
 \label{fig setup multiple baths} 
 \caption{Sketch of a system in contact with an infinite bath at temperature $T_1$ and chemical potential $\mu_1$, 
 a finite bath at $T_2$ and $\mu_2$ and a work reservoir (sketched by a weight $m$). The system experiences a change 
 in internal energy $\Delta U_S$ and system entropy $\Delta S_S$ due to heat flows $Q_1$ and $Q_2$ from the bath and 
 the mechanical work $W$ supplied to it. Since the second bath is finite, its temperature and chemical potential 
 change with time $t$ whereas $T_1$ and $\mu_1$ remain unchanged (sketched on the right). }
\end{figure}

First, as our relevant coarse-graining we choose $\{|s_t\rl s_t|\otimes\Pi_{E_1}\otimes\dots\otimes\Pi_{E_n}\}$, 
where $\Pi_{E_\nu}$ corresponds to a coarse-grained measurement of the energy of bath $\nu$. For notational simplicity, 
we write $\bb E \equiv (E_1,\dots,E_n)$. Then, the observational entropy is generalized to 
\begin{equation}\label{eq obs ent multiple bath}
 S_\text{obs}^{S_t,\bb E}[\rho_{SB}(t)] = -\sum_{s_t,\bb E} p_{s_t,\bb E}(t)\ln\frac{p_{s_t,\bb E}(t)}{V_{\bb E}}
\end{equation}
with $V_{\bb E} \equiv \prod_\nu \mbox{tr}_{B_\nu}\{\Pi_{E_\nu}\}$. The initial state of our setup is described 
by a generalization of the initial state~(\ref{eq initial state OQS}), 
\begin{equation}\label{eq initial state multiple baths}
 \rho_{S\bb B}(0) = \rho_S(0)\otimes \pi_1(\beta_1)\otimes\dots\otimes\pi_n(\beta_n),
\end{equation}
assuming each bath to be prepared in a Gibbs ensemble at inverse temperature $\beta_\nu$. Clearly, from 
Sec.~\ref{sec sub generalized states} we know that a larger class of initial states is admissible. 

From these considerations, a non-negative change in thermodynamic entropy quantified by 
Eq.~(\ref{eq obs ent multiple bath}) follows: 
\begin{equation}
 \Sigma_a(t) = S_\text{obs}^{S_t,\bb E}[\rho_{SB}(t)] - S_\text{obs}^{S_0,\bb E}[\rho_{SB}(0)] \ge 0.
\end{equation}
Since the initial state is decorrelated, we also confirm 
\begin{equation}
 \Sigma_b(t) = \Delta S_\text{obs}^{S_t}[\rho_S(t)] + \sum_\nu \Delta S_\text{obs}^{E_\nu}[\rho_\nu(t)] \ge 0.
\end{equation}
Importantly, the difference 
\begin{equation}
 \begin{split}
  \Sigma_b(t) - \Sigma_a(t) =&~ S_\text{obs}^{S_t}[\rho_S(t)] + \sum_\nu S_\text{obs}^{E_\nu}[\rho_\nu(t)] \\
  &- S_\text{obs}^{S_t,\bb E}[\rho_{SB}(t)]
 \end{split}
\end{equation}
is now given by the non-negative \emph{total information}, which---even if $\dim\C H_S$ is small---can be large as it 
also quantifies the correlations between the different baths. 

Next, we use our definition of temperature, Eq.~(\ref{eq def noneq temp}), for each bath separately. Then, if 
$(T_t^*)_\nu$ describes the effective nonequilibrium temperature of bath $\nu$ at time $t$, manipulations identical to 
those of Sec.~\ref{sec sub Clausius} yield 
\begin{equation}\label{eq EP 4 micro multi}
 \Sigma_c(t) = S_\text{obs}^{S_t}[\rho_S(t)] - \sum_\nu \int \frac{\dbar Q_\nu(s)}{(T_s^*)_\nu} \ge 0.
\end{equation}
Here, $\dbar Q_\nu(t) = -dU_{B_\nu}(t)$ is minus the infinitesimal change in energy of bath $\nu$. This identification 
of heat implies the first law 
\begin{equation}
 \Delta U_S(t) = \sum_\nu Q_\nu(t) + W(t),
\end{equation}
where $U_S(t) \equiv \mbox{tr}_{SB}\{[H_S(\lambda_t) + \sum_\nu V^{(\nu)}_{SB}]\rho_{SB}(t)\}$ generalizes 
Eq.~(\ref{eq def internal energy open}) and $W(t)$ is still given by Eq.~(\ref{eq def work OQS}). 

Finally, we consider the case of very large baths or, alternatively, times $t$ that a short enough such that the baths 
are only weakly perturbed. Then, $(T_t^*)_\nu \approx T_\nu$ and Eq.~(\ref{eq EP 4 micro multi}) reduces to 
\begin{equation}\label{eq EP 5 micro multi}
 \Sigma_d(t) = S_\text{obs}^{S_t}[\rho_S(t)] - \sum_\nu \frac{Q_\nu(t)}{T_\nu} \ge 0.
\end{equation}
Of course, one could also imagine situations where the baths have different sizes and need to be treated accordingly. 
Moreover, in the long run and if the bath size is finite, one expects all baths to equilibrate to the same 
temperature. This behaviour is captured by Eq.~(\ref{eq EP 4 micro multi}), but not by Eq.~(\ref{eq EP 5 micro multi}). 

\subsection{Particle exchanges}
\label{sec sub particles}

Energy is not the only quantity, which gets exchanged between different subsystems. Also particles are exchanged and 
the most relevant particle species for current nanotechnological applications are probably electrons. To avoid 
notational clutter, the exposition below is adapted to the case of a \emph{single} particle species only. 

We start with equilibrium considerations for an isolated system with Hamiltonian $H$ and particle number operator 
$\hat N$. We use a `hat' for the particle number operator to distinguish it from its expectation value 
$N \equiv \mbox{tr}\{\hat N\rho\}$. At equilibrium, the theory is constructed by using the grand canonical ensemble 
\begin{equation}
 \Xi(\beta,\mu) \equiv \frac{e^{-\beta(H-\mu\hat N)}}{\C Z(\beta,\mu)},
\end{equation}
where $\mu$ is the chemical potential and $\C Z(\beta,\mu) \equiv \mbox{tr}\{e^{-\beta(H-\mu\hat N)}\}$ the 
grand canonical partition function. An infinitesimal change in the equilibrium entropy 
$\C S(\beta,\mu) = S_\text{vN}[\Xi(\beta,\mu)]$ can be expressed as 
\begin{equation}\label{eq dS grand canonical}
 d\C S = \frac{1}{T}d\C U - \frac{\mu}{T}dN.
\end{equation}

In unison with our definition of an effective nonequilibrium temperature, we can also introduce an effective 
chemical potential for any state $\rho$ by demanding that its particle number expectation value matches the one 
of the grand canonical ensemble. Thus, the following two equations determine $\beta^*$ and $\mu^*$: 
\begin{align}
 U = \mbox{tr}\{H\rho\}         &\equiv \mbox{tr}\{H\Xi(\beta^*,\mu^*)\}, \\
 N = \mbox{tr}\{\hat N\rho\}    &\equiv \mbox{tr}\{\hat N\Xi(\beta^*,\mu^*)\}.
\end{align}
To connect the equilibrium entropies of two states with $(\beta^*_t,\mu^*_t)$ and $(\beta^*_0,\mu^*_0)$, we find 
from Eq.~(\ref{eq dS grand canonical}) and in accordance with Eq.~(\ref{eq entropy Clausius term}) that 
\begin{equation}\label{eq Delta S grand canonical}
 \C S(\beta_t^*,\mu_t^*) - \C S(\beta_0^*,\mu_0^*) = \int \frac{dU(s) - \mu_s^* dN(s)}{T_s^*}.
\end{equation}

Finally, we define observational entropy with respect to a coarse-graining of energy and particles. 
Since $[H,\hat N] = 0$, we can jointly measure both quantities. Then, 
\begin{equation}
 S_\text{obs}^{E,N}(\rho) = \sum_{E,N} p_{E,N}(-\ln p_{E,N} + \ln V_{E,N}).
\end{equation}
Each quantity is defined by analogy with the previous case: $p_{E,N} = \mbox{tr}\{\Pi_E\Pi_N\rho\}$ and 
$V_{E,N} = \mbox{tr}\{\Pi_E\Pi_N\}$. Note that both, $\Pi_E$ and $\Pi_N$, describe in general again a measurement with 
a finite resolution or uncertainty. As before, however, we demand that the uncertainty is small enough 
to be consistent at equilibrium such that $S_\text{obs}^{E,N}[\Xi(\beta,\mu)] \approx \C S(\beta,\mu)$. 

After these preliminary consideration, we can now return to the case of a system coupled to multiple baths, exchanging 
energy and particles with them, see Fig.~\ref{fig setup multiple baths}. 
Equations~(\ref{eq dS grand canonical}) and~(\ref{eq Delta S grand canonical}) 
suggest to define the infinitesimal heat flux from bath $\nu$ at time $t$ as 
\begin{equation}
 \dbar Q_\nu(t) \equiv -[dU_\nu(t) - \mu_\nu^* dN_\nu(t)],
\end{equation}
even if the state $\rho_\nu(t)$ of bath $\nu$ is out of equilibrium. From this definition it follows that the 
first law needs to be generalized to 
\begin{equation}
 \Delta U_S(t) = \sum_\nu Q_\nu(t) + W(t) + W_\text{chem}(t).
\end{equation}
Here, a new contribution appears known as \emph{chemical work}. It is defined as 
$W_\text{chem}(t) = \int \dbar W_\text{chem}(s)$ with $\dbar W_\text{chem}(s) = \sum_\nu (\mu_s^*)_\nu dN_\nu$, 
where $(\mu_s^*)_\nu$ denotes the chemical potential of bath $\nu$ at time $s$. This form of work is 
associated with particle exchanges and quantifies the ability of, e.g., electrons to charge a battery, whose 
energy can be converted back into mechanical work. 

Finally, by choosing the coarse-graining 
\begin{equation}
 \{|s_t\rl s_t|\otimes \Pi_{E_1}\Pi_{N_1} \otimes\dots\otimes \Pi_{E_n}\Pi_{N_n}\},
\end{equation}
observational entropy becomes 
\begin{equation}
 S_\text{obs}^{S_t,\bb E,\bb N}[\rho_{SB}(t)] 
 = -\sum_{s_t,\bb E,\bb N} p_{s_t,\bb E,\bb N}(t) \ln\frac{p_{s_t,\bb E,\bb N}(t)}{V_{s_t,\bb E,\bb N}(t)}.
\end{equation}
The definition of each term should be obvious as it follows by analogy with the previous cases. Furthermore, 
by restricting our considerations to the initial state 
\begin{equation}
 \rho_{SB}(0) = \rho_S(0)\otimes\Xi(\beta_1,\mu_1)\otimes\dots\otimes\Xi(\beta_n,\mu_n),
\end{equation}
the following hierarchy of second laws also follows by analogy: 
\begin{align}
 0  &\le \Sigma_a(t) \equiv \Delta S_\text{obs}^{S_t,\bb E,\bb N}[\rho_{SB}(t)] \\
    &\le \Sigma_b(t) \equiv 
    \Delta S_\text{obs}^{S_t}[\rho_S(t)] + \sum_\nu \Delta S_\text{obs}^{E_\nu,N_\nu}[\rho_\nu(t)] \\
    &\le \Sigma_c(t) \equiv \Delta S_\text{obs}^{S_t}[\rho_S(t)] - \int \frac{\dbar Q_\nu(s)}{(T_s^*)_\nu}. 
    \label{eq EP 4 micro N}
\end{align}
Again, it is possible to quantify the difference between $\Sigma_a(t)$ and $\Sigma_b(t)$ by the total information 
and between $\Sigma_b(t)$ and $\Sigma_c(t)$ by nonequilibrium features in the bath distribution. Finally, by 
considering the limit of a weakly perturbed bath described by $(T_t^*)_\nu \approx T_\nu$ and 
$(\mu_t^*)_\nu \approx \mu_\nu$, we obtain from Eq.~(\ref{eq EP 4 micro N}) 
\begin{equation}\label{eq EP 5 micro N}
 \Sigma_c(t) \approx \Sigma_d(t) = \Delta S_\text{obs}^{S_t}[\rho_S(t)] - \frac{Q_\nu(t)}{T_\nu} \ge 0
\end{equation}
with $Q_\nu(t) = -[\Delta U_\nu(t) - \mu_\nu\Delta N_\nu(t)]$. Equation~(\ref{eq EP 5 micro N}) quantifies the 
entropy production for transport processes, where the baths are kept at a \emph{fixed} temperature and chemical 
potential. The scope of Eq.~(\ref{eq EP 4 micro N}) is wider and captures dynamical features in the bath, already 
observed in experiments~\cite{BrantutEtAlScience2012, BrantutEtAlScience2013}, in a self-contained way. 

\section{Fluctuation theorems}
\label{sec fluctuation theorems}

Fluctuation theorems present important refinements on our view on the second law. They are exact relations, which 
constrain the fluctuations in thermodynamics quantities such that, among other consequences, the second law can be 
formulated as an \emph{equality}. Fluctuations theorems play an important role in classical nonequilibrium 
statistical mechanics~\cite{EvansSearlesAdvPhy2002, PitaevskiiPU2011, JarzynskiAnnuRevCondMat2011}, stochastic 
thermodynamics~\cite{SeifertRPP2012, VandenBroeckEspositoPhysA2015} and quantum thermodynamics based on the so-called 
`two-point measurement scheme'~\cite{EspositoHarbolaMukamelRMP2009}. 

The goal of this section is to show that the entropy production as defined by the change of observational entropy also 
satisfies fluctuations theorems. We do so in an abstract way as in Sec.~\ref{sec sub obs ent}, assuming the entire 
system is isolated and evolves according to Eq.~(\ref{eq time evo isolated}). We believe that the derivation below 
captures the essence of fluctuation theorems from a \emph{technical} point of view. Particular applications can be 
then worked out by following the lines of Secs.~\ref{sec entropy second law isolated}, \ref{sec first second law open} 
and~\ref{sec further extensions}, which we will not do here. 

To approach the problem, we first define fluctuations of observational entropy. From definition~(\ref{eq def obs ent}) 
we see that observational entropy can be written as an average of 
\begin{equation}\label{eq obs ent fluct}
 s_\text{obs}(x,t) \equiv -\ln p_x(t) + \ln V_x,
\end{equation}
where the average is carried out with respect to the probabilities $p_x(t)$: 
\begin{equation}
 S_\text{obs}^X[\rho(t)] = \sum_x p_x(t) s_\text{obs}(x,t). 
\end{equation}
Thus, $s_\text{obs}(x,t)$ is a random variable, whose construction requires knowledge of the probabilities $p_x(t)$. 

Next, we look at fluctuations in the \emph{change} of $s_\text{obs}(x,t)$. To this end, we use the two-point 
measurement scheme, first used in Refs.~\cite{PiechocinskaPRA2000, KurchanArXiv2000, TasakiArXiv2000}. 
Imagine that we perform initially a measurement of $X_0$, giving rise to outcome $x_0$, and finally a measurement of 
$X_t$ with outcome $x_t$. The fluctuations of the random variable~(\ref{eq obs ent fluct}) in this process are 
\begin{equation}\label{eq stoch obs ent change}
 \begin{split}
  \Delta s_\text{obs}(x_t,t;x_0,0) &\equiv s_\text{obs}(x_t,t) - s_\text{obs}(x_0,0) \\
  &= \ln\frac{p_{x_0}(0)V_{x_t}}{V_{x_0}p_{x_t}(t)}. 
 \end{split}
\end{equation}
Moreover, the probability to observe outcomes $(x_t,x_0)$ is 
\begin{equation}\label{eq forward two time prob}
 p_{x_t,x_0} = \mbox{tr}\{\Pi_{x_t}U(t,0)\Pi_{x_0}\rho(0)\Pi_{x_0}U^\dagger(t,0)\}.
\end{equation}
Finally, let us denote by $\lr{\dots} = \sum_{x_t,x_0} \dots p_{x_t,x_0}$ an average over this process. 

Then, if the condition $S_\text{obs}^{X_0}[\rho(0)] = S_\text{vN}[\rho(0)]$ is satisifed 
(which we also assumed to derive our second laws, cf.~Lemma~\ref{lemma 2nd law}), we find the following 
\emph{integral fluctuation theorem}: 
\begin{equation}\label{eq fluctuation theorem}
 \lr{e^{-\Delta s_\text{obs}(x_t,t;x_0,0)}} = 1,
\end{equation}
where here and in the following we tacitly assume $p_{x_0}\neq0$ for all $x_0$ to avoid `dividing by zero,' 
which is related to the phenomenon of absolute irreversibility~\cite{MurashitaFunoUedaPRE2014}. 

The proof goes as follows. From Eq.~(\ref{eq stoch obs ent change}) and 
the assumption $\rho(0) = \sum_{x_0} p_{x_0}\Pi_{x_0}/V_{x_0}$, which implies 
$\Pi_{x_0}\rho(0)\Pi_{x_0} = p_{x_0}\Pi_{x_0}/V_{x_0}$, we get the chain of equalities: 
\begin{align}
 \lr{e^{-\Delta s_\text{obs}(x_t,t;x_0,0)}} 
 &= \sum_{x_t,x_0} \mbox{tr}\{\Pi_{x_t}U(t,0)\Pi_{x_0}U^\dagger(t,0)\}\frac{p_{x_t}}{V_{x_t}} \nonumber \\
 &= \sum_{x_t} \mbox{tr}\{\Pi_{x_t}U(t,0)U^\dagger(t,0)\}\frac{p_{x_t}}{V_{x_t}} \nonumber \\
 &= \sum_{x_t} \mbox{tr}\{\Pi_{x_t}\}\frac{p_{x_t}}{V_{x_t}} = 1.
\end{align}
For the last steps we used $\sum_{x_0}\Pi_{x_0} = 1$, $U(t,0)U^\dagger(t,0) = 1$, $\mbox{tr}\{\Pi_{x_t}\} = V_{x_t}$, 
and $\sum_{x_t} p_{x_t} = 1$. 

By using the inequality $e^y \ge 1+y$ for $y\in\mathbb{R}$, we confirm that the integral fluctuation 
theorem~(\ref{eq fluctuation theorem}) implies the formal second law~(\ref{eq 2nd law formal}). An even more general 
class of integral fluctuation theorems was derived in Ref.~\cite{SchmidtGemmerZNA2020}. 

Finally, there is also a \emph{detailed fluctuation theorem}, which makes the connection with time-reversal 
symmetry (see Appendix~\ref{sec time reversal}) particularly transparent and implies the integral fluctuation 
theorem. To derive it, we start with the probability $P(\Delta s)$ to observe a change in 
observational entropy $\Delta s$ in the \emph{forward process}: 
\begin{equation}\label{eq forward probability}
 P_\text{fw}(\Delta s) = 
 \sum_{x_t,x_0} \delta[\Delta s - \Delta s_\text{obs}^\text{fw}(x_t,t;x_0,0)] p^\text{fw}_{x_t,x_0},
\end{equation}
where $\delta(\cdot)$ denotes the Dirac-delta function. Here, we have been particularly cautious and indicated 
by `fw' quantities associated to the forward process.

Next, we introduce the \emph{`time-reversed' process}. To this end, we use time-reversal symmetry 
(see Appendix~\ref{sec time reversal}) and denote the time-reversed projectors by 
$\Pi_x^\Theta \equiv \Theta \Pi_x \Theta^{-1}$ and by $U_\Theta(t,0)$ the unitary time evolution operator 
associated to the time-reversed dynamics. The time-reversed process is then defined by starting with a 
measurement of $\Pi_{x_t}^\Theta$, followed by an evolution according to $U_\Theta(t,0)$, and ending with a 
measurement of $\Pi_{x_0}^\Theta$. The probability to observe the sequence of measurement results $(x_0,x_t)$ in the 
time-reversed process is 
\begin{equation}
 p_{x_0,x_t}^\text{tr} \equiv 
 \mbox{tr}\{\Pi^\Theta_{x_0}U_\Theta(t,0)\Pi^\Theta_{x_t}\rho_\text{tr}(t)\Pi^\Theta_{x_t}U_\Theta^\dagger(t,0)\},
\end{equation}
where $\rho_\text{tr}(t)$ is the initial state in the time-reversed process. Note that we count time `backwards' in 
the time-reversed process, starting at $t$ and ending at $0$, which is convenient from a notational perspective. We 
emphasize, however, that in any experimental realization of that process time runs `as always' forward. 

As done multiple times before, we assume again that the initial states in the forward and time-reversed process obey 
$S_\text{obs}^{X_0}[\rho_\text{fw}(0)] = S_\text{vN}[\rho_\text{fw}(0)]$ and 
$S_\text{obs}^{X_t}[\rho_\text{tr}(t)] = S_\text{vN}[\rho_\text{tr}(t)]$. This implies that (see 
Lemma~\ref{lemma S obs equal S vN}) $\rho_\text{fw}(0) = \sum_{x_0} p^\text{fw}_{x_0} \Pi_{x_0}/V_{x_0}$ 
and $\rho_\text{tr}(t) = \sum_{x_0} p^\text{tr}_{x_t} \Pi_{x_t}^\Theta/V_{x_t}$ for arbitrary probabilities 
$p^\text{fw}_{x_0}$ and $p_{x_t}^\text{tr}$. We now make the important choice that 
\begin{equation}\label{eq special choice}
 p^\text{tr}_{x_t} = p^\text{fw}_{x_t},
\end{equation}
i.e., the initial probabilities in the time-reversed process coincide with the final measurement statistics of the 
forward process. Note that this does \emph{not} imply that $\rho_\text{tr}(t)$ is the time-reversed final state of 
the forward process, i.e., $\rho_\text{tr}(t) \neq \Theta\rho_\text{fw}(t)\Theta^{-1}$ with 
$\rho_\text{fw}(t) = U(t,0)\rho_\text{fw}(0)U^\dagger(t,0)$. Taken together, these 
assumptions and our special choice imply the central relation 
\begin{equation}\label{eq central FT}
 p^\text{fw}_{x_t,x_0} = \exp[\Delta s_\text{obs}^\text{fw}(x_t,t;x_0,0)] p_{x_0,x_t}^\text{tr}.
\end{equation}
This result follows from the relation 
\begin{equation}
 \begin{split}
  \mbox{tr}\{&\Pi_{x_0}U^\dagger(t,0)\Pi_{x_t}U(t,0)\} \\
  &= \mbox{tr}\{\Pi^\Theta_{x_0}U_\Theta(t,0)\Pi^\Theta_{x_t}U_\Theta^\dagger(t,0)\}
 \end{split}
\end{equation}
which is a consequence of Eqs.~(\ref{eq trace conjugation}) and~(\ref{eq time reversal unitary}) and 
$\mbox{tr}\{\Pi_{x_0}U^\dagger(t,0)\Pi_{x_t}U(t,0)\} \in \mathbb{R}$. 

Now, we return to Eq.~(\ref{eq forward probability}). From Eq.~(\ref{eq central FT}) we immediately obtain 
\begin{equation}
 P_\text{fw}(\Delta s) = e^{\Delta s} 
 \sum_{x_t,x_0} \delta[\Delta s - \Delta s_\text{obs}^\text{fw}(x_t,t;x_0,0)] p_{x_0,x_t}^\text{tr},
\end{equation}
where we used the Dirac-delta function to pull the factor $e^{\Delta s}$ out of the summation. Next, we note 
that $\Delta s_\text{obs}^\text{fw}(x_t,t;x_0,0) = - \Delta s_\text{obs}^\text{fw}(x_0,0;x_t,t)$, which implies 
\begin{align}\label{eq pre DFT}
 P_\text{fw}(\Delta s) &= e^{\Delta s} 
 \sum_{x_t,x_0} \delta[\Delta s + \Delta s_\text{obs}^\text{fw}(x_0,0;x_t,t)] p_{x_0,x_t}^\text{tr} \nonumber \\
 &\equiv e^{\Delta s} Q_\text{tr}(-\Delta s).
\end{align}
Here, $Q_\text{tr}(\Delta s)$ is the probability that the quantity $\Delta s_\text{obs}^\text{fw}(x_0,0;x_t,t)$ 
takes on the value $\Delta s$ in the time-reversed process with respect to the choice~(\ref{eq special choice}). 
This choice also reveals that  
\begin{align}
 \Delta s_\text{obs}^\text{tr}(x_0,0;x_t,t) 
 &= \ln\frac{p_{x_0}^\text{tr}(0) V_{x_t}}{V_{x_0} p_{x_t}^\text{tr}(t)} \\
 &= \ln\frac{p_{x_0}^\text{fw}(0)}{p_{x_0}^\text{tr}(0)} + \Delta s_\text{obs}^\text{fw}(x_0,0;x_t,t). \nonumber
\end{align}
In general, $p_{x_0}^\text{fw}(0) \neq p_{x_0}^\text{tr}(0)$ and then $Q_\text{tr}(\Delta s)$ can \emph{not} be 
interpreted as the distribution of the stochastic entropy production associated to the time-reversed process. Instead, 
$Q_\text{tr}(\Delta s)$ quantifies the distribution of entropy production associated to the forward dynamics, 
according to a fixed choice of $p_{x_0}^\text{fw}(0)$, but which is observed during the time-reversed process. 
Nevertheless, Eq.~(\ref{eq pre DFT}) implies the integral fluctuation theorem~(\ref{eq fluctuation theorem}) and, 
therefore, provides a stronger constraint on $P_\text{fw}(\Delta s)$ than Eq~(\ref{eq fluctuation theorem}). 

Finding the conditions for which $p_{x_0}^\text{fw}(0) = p_{x_0}^\text{tr}(0)$ holds is not simple in our 
general setting and involves additional assumptions (e.g., steady state regime or relaxation to equilibrium 
after the driving)~\cite{SeifertRPP2012, VandenBroeckEspositoPhysA2015}. However, if 
$p_{x_0}^\text{fw}(0) = p_{x_0}^\text{tr}(0)$, then $Q_\text{tr}(\Delta s) \equiv P_\text{tr}(\Delta s)$ is the 
distribution of entropy production during the time-reversed process and we find the detailed fluctuation theorem 
\begin{equation}
 \frac{P_\text{fw}(\Delta s)}{P_\text{tr}(-\Delta s)} = e^{\Delta s}.
\end{equation}

\section{Concluding reflections}
\label{sec final}

This tutorial was devoted to the understanding, derivation and quantification of the laws of thermodynamics in open 
and isolated quantum systems based on microscopic definitions for internal energy, heat, work, 
(thermodynamic) entropy, temperature and chemical potentials. Summarizing the situation for open systems without 
particle exchanges, the first law reads $\Delta U_S(t) = Q(t) + W(t)$. Moreover, using 
Eq.~(\ref{eq obs ent OQS 1 bath}) as our entropy definition and Eq.~(\ref{eq def noneq temp}) as our definitions for 
the (nonequilibrium) temperature of the bath, we found that the second law for a system initially decorrelated from 
a thermal bath can be summarized by the following hierarchy of inequalities, where each member reflects the degree of 
control or information taken into account in an experiment: 
\begin{widetext}
 \begin{align}
  0 &\le \Delta S_\text{obs}^{S_t,E_B}[\rho_{SB}(t)] 
    &\text{most general version of the second law} \\
    &\le \Delta S_\text{obs}^{S_t}[\rho_S(t)] + \Delta S_\text{obs}^{E_B}[\rho_B(t)] 
    &\text{disregard $SB$ correlations $I_\text{obs}^{S_t,E_B}[\rho_{SB}(t)]$} \\
    &\le \Delta S_\text{obs}^{S_t}[\rho_S(t)] - \int \frac{\dbar Q(s)}{T_s^*}
    &\text{disregard noneq.~bath distribution $S_\text{vN}[\pi_B(\beta_t^*)] - S_\text{obs}^{E_B}[\rho_B(t)]$} \\
    &\le \Delta S_\text{obs}^{S_t}[\rho_S(t)] - \frac{Q(t)}{T_0} 
    &\text{disregard finite-size effects $D[\pi_B(\beta_t^*)\|\pi_B(\beta_0)]$}
 \end{align}
\end{widetext}
This exactly matches the hierarchy of phenomenological second laws~(\ref{eq EP basic basic}), (\ref{eq EP basic}), 
(\ref{eq EP traditional}) and~(\ref{eq EP traditional OQS}) if one identifies observational entropy with 
thermodynamic entropy, as also done in Refs.~\cite{VonNeumann1929, VonNeumannEPJH2010, PercivalJMP1961, 
LatoraBarangerPRL1999, NauenbergAJP2004, LeePRE2018, SafranekDeutschAguirrePRA2019a, SafranekDeutschAguirrePRA2019b, 
StrasbergArXiv2019b, SafranekAguirreDeutschPRE2020, FaiezEtAlPRA2020, SchindlerSafranekAguirrePRA2020, 
RieraCampenySanperaStrasbergPRXQ2021, StrasbergDiazRieraCampenyPRE2021}. Thus, by starting with a microscopic 
definition of thermodynamic entropy, a conceptually clear and consistent approach emerges, which covers 
diverse applications such as multiple baths, initially correlated or non-Gibbsian bath states, small 
baths with changing temperatures, \emph{etc}. 

In fact, our growing nanotechnological abilities also enhance our abilities to control and measure a bath. Finite size 
effects and changing thermodynamic parameters are already reality in experiments~\cite{TrotzkyEtAlNP2012, 
BrantutEtAlScience2012, GringEtAlScience2012, BrantutEtAlScience2013, ClosEtAlPRL2016, KaufmanEtAlScience2016, 
KrinnerEsslingerBrantutJP2017, BohlenEtAlPRL2020}. Furthermore, ultrasensitive thermometers were developed to track 
small changes in bath energies~\cite{GoveniusEtAlPRB2014, GasparinettiEtAlPRAppl2015, HalbertalEtAlNat2016, 
KarimiEtAlNC2020}. Observational entropy explicitly takes into account experimental (in)capabilities in its definition 
from the start. Furthermore, notice that the initial states chosen in this tutorial, e.g., in 
Eqs.~(\ref{eq initial state OQS}) or~(\ref{eq initial state general}) or see Lemma~\ref{lemma S obs equal S vN}, 
satisfy the requirement that an initial measurement of the chosen coarse-graining does \emph{not} disturb the state. 
Hence, the present approach is suitable to study a variety of thermodynamic processes at the nanoscale and can be 
readily applied to many experiments. 

Of course, not in every experiment will it be possible to measure the variables or observables that we have 
chosen above to derive the second law. Moreover, even if it is possible to measure those observables, the 
coarse-graining might not be `fine' enough and deviations of the actual initial state from our choice above, 
Eqs.~(\ref{eq initial state OQS}) or~(\ref{eq initial state general}), could become visible. In these cases, there is 
no guarantee that the change in observational entropy remains always non-negative and satisfies a second law. Still, 
we believe that 
one should not be afraid of such `violations' of the second law. Instead, one should view them as a \emph{welcome} 
feature as they reveal something \emph{unexpected} about the experimental setup. To quote Jaynes 
again~\cite{JaynesInBook1992}: ``recognizing this should increase rather than decrease our confidence 
in the future of the second law, because it means that if an experiment ever sees an apparent violation, then instead 
of issuing a sensational announcement, it will be more prudent to search for that unobserved degree of freedom.'' 

Despite experimental applications, we believe that also much fruitful theoretical work lies ahead of us. 
For instance, an observer's (in)capabilities to precisely know or control an experimental setup are also a central 
element of the resource theory of thermodynamics~\cite{GourMuellerEtAlPR2015, GooldEtAlJPA2016, NgWoodsBook2018, 
LostaglioRPP2019}. In there, additional constraints on the global unitary dynamics are imposed, for instance, by 
demanding that $[U(t,0),H_S+H_B] = 0$. In the present approach, we have put no constraints on the dynamics, but 
rather focused on constraints on the available knowledge. For instance, if the bath energy is only known up to a 
small uncertainty, it is experimentally impossible to determine whether $[U(t,0),H_S+H_B] = 0$ strictly holds. 
Combining both aspects could prove very fruitful to equip the present framework with more predictive power while 
making the resource theory approach more applicable in practice. In fact, the latter seems still in search of 
practical experimental applications~\cite{HalpernChap2017}.

It is also desirable to extend the present approach in other directions. For instance, the role of 
non-commuting coarse-grainings does not yet appear to be fully understood. Moreover, the tutorial focused on 
`two-time statistics' characterized by a non-disturbing initial and a final measurement. It remains unclear 
what is the effect of multiple sequential measurements~\cite{MilzModiPRXQ2021}, but see 
Ref.~\cite{GemmerSteinigewegPRE2014} for preliminary results. It is also interesting to extend the present framework 
to more generalized measurements characterized by positive operator-valued measures 
(`POVMs'). In fact, strict projective measurements are hard to realize in an experiment. 
More likely is that a measurement outcome $x'$ corresponds to applying a Gaussian weight of projectors $\Pi_x$ fixed 
around $x\approx x'$. Interestingly, for an arbitrary set of POVM elements $\{P_x\}$, which always satisfy 
$\sum_x P_x = 1$, the main definition~(\ref{eq def obs ent}) of observational entropy remains: the probability to 
observe outcome $x$ is given by $p_x = \mbox{tr}\{P_x\rho\}$ and the volume term becomes $V_x = \mbox{tr}\{P_x\}$. 
Therefore, it seems that the same qualitative picture should emerge, but this requires further research. 
It also seems that the current framework could inspire research in open quantum system theory (see, e.g, 
Ref.~\cite{RieraCampenySanperaStrasbergPRXQ2021}) and it has much potential to be fruitfully combined with methods 
reviewed in Refs.~\cite{GemmerMichelMahlerBook2004, DAlessioEtAlAP2016, GogolinEisertRPP2016, GooldEtAlJPA2016, 
DeutschRPP2018, MoriEtAlJPB2018} to study the equilibration and thermalization dynamics of isolated many-body systems. 

Finally, one can question whether---out of the plethora of possible candidates---our choice to use observational 
entropy as a microscopic definition for thermodynamic entropy was correct. We believe that this is not 
definitely answered by the present tutorial, but we also believe that it has added significant appeal to this 
definition. Therefore, it seems that the final answer to that question can not be too far from 
the present considerations. 

Thus, to conclude, observational entropy is a versatile concept, which provides a link between problems studied in 
the field of equilibration and thermalization in isolated quantum systems, quantum thermodynamics and open quantum 
systems theory. Therefore, it has the potential to provide an overarching framework for many problems studied in 
nonequilibrium statistical mechanics. 

\subsection*{Acknowledgements}

We are grateful to Kavan Modi, Juan Parrondo, Felix Pollock, Andreu Riera-Campeny, Dominik \u{S}afr\'anek, Anna 
Sanpera, and Joan Vaccaro for many interesting discussions. We also thank Miquel Saucedo Cuesta and Pablo 
Torron Perez for useful comments on the manuscript. PS further acknowledges various stimulating discussions with 
Massimiliano Esposito about the nature of entropy production over the years. The authors were partially supported by 
the Spanish Agencia Estatal de Investigaci\'on (project PID2019-107609GB-I00 and IJC2019-040883-I), the Spanish MINECO 
(FIS2016-80681-P, AEI/FEDER, UE), and the Generalitat de Catalunya (CIRIT 2017-SGR-1127). 
PS is financially supported by the DFG (project STR 1505/2-1). 


\bibliography{/home/philipp/Documents/references/books,/home/philipp/Documents/references/open_systems,/home/philipp/Documents/references/thermo,/home/philipp/Documents/references/info_thermo,/home/philipp/Documents/references/general_QM,/home/philipp/Documents/references/math_phys,/home/philipp/Documents/references/equilibration}

\appendix
\section{The time-reversal operator}
\label{sec time reversal}

In classical mechanics, it is clear from intuition that the trajectory of a particle is traversed in the opposite 
direction if one flips the momentum $p$ of the particle to $-p$. More precisely, if one follows the trajectory in phase 
space during a time window $[0,t]$, then flips the momentum at time $t$ and follows the trajectory 
in phase space further during the time window $[t,2t]$, one ends up with the \emph{same} initial state at time $2t$ 
after flipping the momentum again. This is at least true for all classical Hamiltonian systems in absence of any driving 
protocol ($\dot\lambda_t = 0$) and in absence of any magnetic field $B$. If a magnetic field is present, the above 
statement remains true if we also flip the magnetic field from $B$ to $-B$ during the time window $[t,2t]$. This is 
intuitively appealing if one recalls that a magnetic field is caused by \emph{moving} charges, which changes 
sign when the charges move in the opposite direction. The correct treatment of time-dependent Hamiltonians 
($\dot\lambda_t\neq0$) is revealed below. The picture above describes the essence of \emph{time-reversal symmetry}, 
which might be better called ``reversal of the direction of motion'' according to Wigner~\cite{WignerBook}. 

In quantum mechanics, one introduces a time-reversal operator $\Theta$~\cite{SakuraiBook1994}. Quite 
strangely, it turns out that the time-reversal operator has the property of being \emph{anti-unitary}, which means that 
\begin{equation}
 \langle\Theta\psi|\Theta\phi\rangle = \langle\psi|\phi\rangle^* ~~~ \text{for all} ~~~ 
 |\psi\rangle,|\phi\rangle\in\C H.
\end{equation}
Therefore, $\Theta$ is not an operator in the conventional sense (one should not intend to write it as a matrix). 
However, it follows from the above property that $\Theta$ nevertheless leaves all probabilities unchanged since 
$|\langle\Theta\psi|\Theta\phi\rangle| = |\langle\psi|\phi\rangle|$. 
It is also easy to show that anti-unitarity implies trace conjugation: 
\begin{equation}\label{eq trace conjugation}
 \mbox{tr}\{\Theta O\Theta^{-1}\} = \mbox{tr}\{O\}^* ~~~ \text{for all} ~~~ O.
\end{equation}
Furthermore, if $O$ is an observable, then $\Theta O\Theta^{-1}$ is also an observable with the same eigenvalues as 
$O$ but potentially different eigenvectors. 

As a simple example we consider the quantum mechanical treatment of particles without spin, which is in close 
analogy to the classical case. It then turns out that $\Theta$ can be identified with complex conjugation of the 
wavefunction in position representation. In equations, if $|\psi\rangle = \int dr \psi(r)|r\rangle$, where 
$|r\rangle$ are the eigenstates of the position operator $\hat r$ (here denoted with a hat to be unambiguous), then 
\begin{equation}
 \Theta|\psi\rangle = \int dr\psi^*(r)|r\rangle.
\end{equation}
Without too much effort, one confirms that 
\begin{equation}\label{eq properties Theta}
 \Theta^2 = I, ~~~ \Theta\hat r\Theta^{-1} = \hat r, ~~~ \Theta\hat p\Theta^{-1} = - \hat p,
\end{equation}
where $\hat p$ denotes the momentum operator. The properties~(\ref{eq properties Theta}) are the ones we expect by 
analogy with the classical case. Note in particular that $\Theta$ is an involution, i.e., $\Theta = \Theta^{-1}$, 
which is not always the case (more complicated systems are treated, e.g., in Ref.~\cite{HaakeBook2010}).

From what we said initially, we expect $|\psi(0)\rangle = \Theta^{-1} U_\Theta(t,0)\Theta U(t,0)|\psi(0)\rangle$ for 
any initial state $|\psi(0)\rangle$. In words: if we propagate any initial state forward in time using $U(t,0)$, then 
time-reverse it, then propagate it forward in time with respect to the time-reversed propagator $U_\Theta(t,0)$, and 
finally time-reverse it again, then we end up with the same initial state. Written as an operator identity, we have 
\begin{equation}\label{eq time reversal unitary}
 \Theta^{-1} U_\Theta(t,0)\Theta = U^\dagger(t,0).
\end{equation}
This is obviously the result one would expect mathematically, but its physical interpretation reveals an important 
symmetry. In fact, directly implementing the right hand side of this equation, i.e., $U^\dagger(t,0)$, in a lab is not 
possible as it requires to map $t\mapsto-t$. In contrast, as demonstrated below, $U_\Theta(t,0)$ corresponds to a 
legitimate `forward' evolution of a physical system. Unfortunately, however, the operator $\Theta$, being anti-unitary, 
can not be implemented in a lab in general. We now turn to the question how to define $U_\Theta(t,0)$ microscopically. 

We first consider the case of a time-independent Hamiltonian $H$ and set $U(t,0) = e^{-iHt}$ and 
$U_\Theta(t,0) = e^{-iH_\Theta t}$ with $H_\Theta$ still unknown. To infer $H_\Theta$, we use the fact that 
anti-unitarity implies anti-linearity, which means $\Theta c|\psi\rangle = c^*\Theta|\psi\rangle$ for any 
complex number $c$. From $U_\Theta(t,0) = \Theta U^\dagger(t,0)\Theta^{-1}$, we then deduce 
$H_\Theta = \Theta H\Theta^{-1}$. If $H$ denotes a Hamiltonian of interacting particles in absence of any 
magnetic field, then $H_\Theta = H$, i.e., the time-reversed motion is generated by the \emph{same} Hamiltonian. This 
follows from the fact that the momentum enters quadratically the Hamiltonian: $\Theta\hat p^2\Theta^{-1} = \hat p^2$. 
If $H = H(B)$ depends on an external magnetic field, then $H_\Theta = H(-B)$, which follows from the fact that for a 
particle with charge $q$ a term $(\hat p-qA/c)^2$ enters the Hamiltonian, where $c$ is the speed of light and $A$ the 
vector potential, which gives rise to the magnetic field. 

Finally, we consider the case with driving protocol $\lambda_s$, $s\in[0,t]$, and approximate the time evolution 
operator as 
\begin{equation}\label{eq app C Trotterization}
 U(t,0) \approx e^{-iH(\lambda_{N-1})\delta t/\hbar} \dots e^{-iH(\lambda_0)\delta t/\hbar},
\end{equation}
where we divided the time interval into steps of size $\delta t = t/N$ and implicitly keep in mind the limit 
$N\rightarrow\infty$ in which Eq.~(\ref{eq app C Trotterization}) becomes exact. We can then infer for the 
time-reversed time evolution operator 
\begin{equation}
 \begin{split}\label{eq app unitary tr}
  U_\Theta(t,0) &= \Theta U^\dagger(t,0)\Theta^{-1} \\
  &= \Theta e^{iH(\lambda_0)\delta t/\hbar} \dots e^{iH(\lambda_{N-1})\delta t/\hbar} \Theta^{-1} \\
  &= e^{-iH_\Theta(\lambda_0)\delta t/\hbar} \dots e^{-iH_\Theta(\lambda_{N-1})\delta t/\hbar},
 \end{split}
\end{equation}
where we again set $H_\Theta(\lambda_t) = \Theta H(\lambda_t)\Theta^{-1}$. Thus, the time-reversed dynamics are 
defined by changing the protocol backwards in time from $\lambda_t$ to $\lambda_0$ with respect to the 
time-reversed Hamiltonian.

\section{Basic information theory concepts}
\label{sec basics information theory}

The basic concept in quantum information theory is the von Neumann entropy 
$S_\text{vN}(\rho) = -\mbox{tr}\{\rho\ln\rho\}$. For $\rho = \sum_x \lambda _x |x\rl x|$ the von Neumann entropy reads 
\begin{equation}
 S_\text{vN}(\rho) = -\sum_x \lambda_x\ln\lambda_x \equiv S_\text{Sh}(\lambda_x),
\end{equation}
where we introduced the Shannon entropy of a classical probability distribution $\lambda_x$. Since a unitary 
transformation $U$ leaves the eigenvalues of any operator invariant, we obtain 
\begin{equation}
 S_\text{vN}(U\rho U^\dagger) = S_\text{vN}(\rho).
\end{equation}
Moreover, the von Neumann entropy is bounded by $0\le S_\text{vN}(\rho)\le\ln d$, where $d = \dim\C H$ is the 
Hilbert space dimension. 

Many other quantities in quantum (classical) information theory are closely related to the von Neumann (Shannon) 
entropy. For us important is the quantum mutual information of a bipartite state $\rho_{XY}$ 
\begin{equation}
 I_{X:Y}(\rho_{XY}) \equiv S_\text{vN}(\rho_X) + S_\text{vN}(\rho_Y) - S_\text{vN}(\rho_{XY}),
\end{equation}
which measures the amount of correlations in the state $\rho_{XY}$. It is bounded by 
\begin{equation}\label{eq MI quantum bound}
 0\le I_{X:Y}(\rho_{XY})\le 2\ln \min\{d_X,d_Y\}.
\end{equation}

By analogy, the classical mutual information for a joint probability distribution $p_{xy}$ with marginals $p_x$ and 
$p_y$ is 
\begin{equation}
 I_{X:Y}(p_{xy}) \equiv S_\text{Sh}(p_x) + S_\text{Sh}(p_y) - S_\text{Sh}(p_{xy}).
\end{equation}
It is bounded by 
\begin{equation}
 0\le I_{X:Y}(p_{xy})\le \ln \min\{d_X,d_Y\}.
\end{equation}
Note the missing factor 2 for the upper bound compared to Eq.~(\ref{eq MI quantum bound}). Furthermore, there are 
multiple ways to extend the mutual information to more than two parties with probability distribution $p_{xyz\dots}$. 
In the main text, we make twice use of the total information defined as 
\begin{equation}
 \begin{split}
  I_\text{tot}(p_{xyz\dots}) \equiv&~ S_\text{Sh}(p_x) + S_\text{Sh}(p_y) + S_\text{Sh}(p_z) + \dots \\
  &- S_\text{Sh}(p_{xyz\dots}),
 \end{split}
\end{equation}
which is always non-negative. 

A final concept used in the main text is quantum relative entropy. It is defined as 
\begin{equation}
 D(\rho\|\sigma) = \mbox{tr}\{\rho(\ln\rho-\ln\sigma)\}
\end{equation}
and measures the statistical `distance' between two states $\rho$ and $\sigma$. However, quantum relative entropy is 
not a metric since it is not symmetric: $D(\rho\|\sigma)\neq D(\sigma\|\rho)$. It satisfies $D(\rho\|\sigma) \ge 0$ 
with equality if and only if $\rho = \sigma$. Furthermore, quantum relative entropy allows to express quantum mutual 
information as 
\begin{equation}
 I_{X:Y}(\rho_{XY}) = D[\rho_{XY}\|\rho_X\otimes\rho_Y],
\end{equation}
which confirms its interpretation as a measure of correlations.

\end{document}